\newif\ifshort
\shorttrue
\newif\ifappendix
\appendixtrue

\documentclass[11pt]{article}

\usepackage[a4paper,includefoot,margin=3cm]{geometry}
\usepackage{amsmath, amssymb, amsthm}
\usepackage{nicefrac}
\usepackage[mathic]{mathtools}
\usepackage{thm-restate}
\usepackage{graphicx,scalerel}
\usepackage[dvipsnames]{xcolor}
\usepackage{subcaption}
\usepackage{multirow}
\usepackage{authblk}

\usepackage{bm}
\usepackage{contour}
\usepackage{framed}

\usepackage[textsize=footnotesize,backgroundcolor=green!40!white,linecolor=black!60,obeyFinal]{todonotes}
\usepackage[T1]{fontenc}
\usepackage[utf8]{inputenc}

\usepackage{dsfont, paralist, microtype, enumerate}
\usepackage{booktabs, tabularx}

\usepackage[pdfdisplaydoctitle,menucolor=orange!40!black,filecolor=magenta!40!black,urlcolor=blue!40!black,linkcolor=red!40!black,citecolor=green!40!black,ocgcolorlinks]{hyperref} %

\newcommand{\ExternalLink}{%
    \tikz[color=magenta, x=1.2ex, y=1.2ex, baseline=-0.05ex]{%
        \begin{scope}[x=1ex, y=1ex]
            \clip (-0.1,-0.1) 
                --++ (-0, 1.2) 
                --++ (0.6, 0) 
                --++ (0, -0.6) 
                --++ (0.6, 0) 
                --++ (0, -1);
            \path[draw, 
                line width = 0.5, 
                rounded corners=0.5] 
                (0,0) rectangle (1,1);
        \end{scope}
        \path[draw, line width = 0.5] (0.5, 0.5) 
            -- (1, 1);
        \path[draw, line width = 0.5] (0.6, 1) 
            -- (1, 1) -- (1, 0.6);
        }
}

\DeclareUnicodeCharacter{1EF3}{\`y}
\usepackage[
	backend=biber,
	isbn=false,
	url=false,
	doi=true,
	backref=true,
	hyperref=true,
	natbib=true,
	maxcitenames=3,
	maxbibnames=99,
	sortcites]{biblatex}
\renewbibmacro{in:}{\ifentrytype{article}{}{\printtext{\bibstring{in}\intitlepunct}}}
\newbibmacro{doilink}[1]{%
	\iffieldundef{doi}{%
		\iffieldundef{url}{%
			#1%
        }{%
		\href{\thefield{url}}{$\ExternalLink$}%
		}%
	}{%
		\href{https://doi.org/\thefield{doi}}{$\ExternalLink$}%
	}%
}
\DeclareFieldFormat*{doi}{\usebibmacro{doilink}{#1}}

\bibliography{strings-long,references}

\usepackage[capitalise]{cleveref}
\crefname{alg}{Algorithm}{Algorithms}
\Crefname{alg}{Algorithm}{Algorithms}
\crefname{axiom}{}{}
\creflabelformat{axiom}{(#2I#1#3)}

\usepackage{url}
\usepackage{paralist}

\usepackage{tikz}
\usetikzlibrary{angles, arrows, arrows.meta, calc, decorations.pathmorphing, decorations.pathreplacing, fit, intersections, matrix, calligraphy, backgrounds, math, shapes, shapes.geometric, spy, positioning, quotes}

\tikzstyle{path} = [color=black!10,line cap=round, line join=round, line width=12pt]
\tikzset{
	vertex/.style={
		circle,
		draw=black,
		solid,
		thick,
		fill=white,
		minimum size=2mm,
		inner sep=0,
	},
	edge/.style={
		draw=black,
		semithick,
	},
	arc/.style={
		edge,
		->,
		>=stealth',
	},
	colornode/.style = {
		circle,
		draw=#1!70!black,
		very thick,
		fill=#1
	},
	smallcolornode/.style = {
		colornode=#1,
		thick,
		scale=0.65
	}
}

\makeatletter

\def\@maketitle{%
  \newpage
  \null
  \vskip 2em%
  \begin{center}%
  \let \footnote \thanks
    {\Large\bf \@title \par}%
    \vskip 1.5em%
    {\large
      \lineskip .5em%
      \begin{tabular}[t]{c}%
        \@author
      \end{tabular}\par}%
    \vskip 1em%
    {\large \@date}%
  \end{center}%
  \par
  \vskip 1.5em}

\def\url@leostyle{%
  \@ifundefined{selectfont}{\def\UrlFont{\sf}}{\def\UrlFont{\small\ttfamily}}}
\makeatother
\usepackage{thmtools}
\theoremstyle{plain}
\newtheorem{theorem}{Theorem}
\crefname{theorem}{Theorem}{Theorems}
\Crefname{theorem}{Thm{.}}{Thms{.}}
\newtheorem{lemma}[theorem]{Lemma}

\newtheorem{corollary}[theorem]{Corollary}
\newtheorem{observation}[theorem]{Observation}
\Crefname{observation}{Obs{.}}{Obs{.}}
\crefname{observation}{Observation}{Observations}

\theoremstyle{definition}

\newtheorem{fact}[theorem]{Fact}
\newtheorem{algo}{Algorithm}

\theoremstyle{remark}

\declaretheorem[style=definition,name=Construction,qed=$\diamond$]{construction}

\newcommand{\prob}[1]{\textnormal{\textsc{#1}}}

\newcommand{\probDefDecision}[3]{
	\begin{center}
	\begin{minipage}{\columnwidth-6em}
		\noindent
		\prob{#1}
		\vspace{5pt}\\
		\setlength{\tabcolsep}{3pt}
		\begin{tabularx}{\textwidth}{@{}lX@{}}
			\textbf{Input:}     & #2 \\
			\textbf{Question:}  & #3
		\end{tabularx}
	\end{minipage}
	\end{center}
}

\newcommand{\probDefOptimization}[4]{
	\begin{center}
	\begin{minipage}{\columnwidth-6em}
		\noindent
		\prob{#1}
		\vspace{5pt}\\
		\setlength{\tabcolsep}{3pt}
		\begin{tabularx}{\textwidth}{@{}lX@{}}
			\textbf{Input:}     & #2 \\
			\textbf{Solution:}  & #3 \\
			\textbf{Objective:} & #4
		\end{tabularx}
	\end{minipage}
	\end{center}
}

\DeclarePairedDelimiterX{\abs}[1]{\lvert}{\rvert}{#1}
\DeclarePairedDelimiterX{\norm}[1]{\lVert}{\rVert}{#1}
\DeclarePairedDelimiterX{\ceil}[1]{\lceil}{\rceil}{#1}

\newcommand{\NN}{\ensuremath{\mathds{N}}}

\newcommand{\RR}{\ensuremath{\mathds{R}}}

\newcommand{\III}{\ensuremath{\mathcal{I}}}

\newcommand{\NP}{\ensuremath{\mathrm{NP}}}
\newcommand{\Poly}{\ensuremath{\mathrm{P}}}

\newcommand{\bigO}{\ensuremath{\mathcal{O}}}

\newcommand{\eps}{\varepsilon}

\newcommand{\ceq}{\ensuremath{\coloneqq}}

\usepackage{xargs}
\newcommandx{\set}[2][1=1]{\ensuremath{\{#1,\ldots,#2\}}} %
\newcommandx{\tlog}[3][1=,3=]{\log_{#1}^{#3}(#2)}
\newcommandx{\ith}[2][1=th]{#2\nobreakdash-#1}

\newcommand{\ttup}[1]{\textup{\texttt{#1}}} %
\newcommand{\scup}[1]{\textup{\textsc{#1}}} %

\DeclareMathOperator{\OPT}{OPT}

\DeclareMathOperator{\PP}{\mathrm{Pr}} %
\DeclareMathOperator{\EE}{\mathbb{E}} %

\DeclareMathOperator{\Int}{\mathrm{Int}} %

\newcommand{\SU}{{\mathcal{S}U}}

\usepackage{etoolbox}
\newcommand{\appsymb}{$\bigstar$}

\ifshort{}\ifappendix{}
\newcommand{\appref}[1]{\hyperref[proof:#1]{\appsymb}}
\else{}
\newcommand{\appref}[1]{\appsymb}
\fi{}
\else{}
\newcommand{\appref}[1]{}
\fi{}

\title{Approximability of Electrical Distribution Network Reconfiguration for General Graphs}

\author[1,2]{Christian~Wallisch}

\author[1,3,4]{Andrea~Benigni}

\author[1,3]{Carsten~Hartmann}

\author[5]{Leon~Kellerhals}

\affil[1]{\normalsize Energy Systems Engineering (ICE-1), Institute of Climate and Energy Systems, Forschungszentrum J\"{u}lich, Germany}
\affil[2]{\normalsize Hasso Plattner Institute, University of Potsdam, Germany}
\affil[3]{\normalsize RWTH Aachen University, Germany}
\affil[4]{\normalsize JARA-Energy, J\"{u}lich, Germany}
\affil[5]{\normalsize TU Clausthal, Germany}
\date{}

\begin{document}

\maketitle

\begin{abstract}
  Electrical distribution networks are regional, medium- and low-voltage power grids connecting energy sources to individual households and businesses with given power demands.
  While these networks contain redundant power lines for reliability, they are typically operated in a radial (spanning tree) configuration by opening and closing switches on the lines.
  The challenge is to find a spanning tree that
  minimizes the sum of the resistive power losses: The power loss of a line $e$ is its resistance $r(e)$ times the squared current $f(e)^2$ flowing across the line.

  We study approximation algorithms for this problem, known as \textsc{Distribution Network Reconfiguration (DNR)}.
  We give an $n$-approximation algorithm and, via a new NP-hardness for planar \textsc{Balanced Connected Partition} with a fixed number of parts, show that no $n^{1-\varepsilon}$-approximation is possible even on planar graphs unless $\mathrm{P}=\mathrm{NP}$, for any $\varepsilon>0$.
  Since the approximation hardness holds only if there are many sources, we focus on \textsc{$k$-DNR} with $k$ sources; this is motivated by traditional distribution networks, where oftentimes $k = 1$.
  For $2$-DNR, we give an approximation lower bound of $\Omega(\log^2 n)$ conditioned on $\mathrm{P}\neq\mathrm{NP}$.
  For $1$-DNR, which is equivalent to finding an uncapacitated confluent flow minimizing the squared Euclidean norm, we prove APX-hardness and give an $\mathcal{O}(\sqrt{n})$-approximation for uniform line resistances, answering an open question by \citet{gupta2022electrical}.%
\end{abstract}

\section{Introduction}

When electric current flows through a power line, some power is converted into heat and thereby lost.
This is captured by Joule's law: The power loss on a line is the product of the line's resistance and the square of the current flowing through it.
This means that heavily loaded lines incur disproportionately large losses.
It is therefore essential to avoid heavy loads on lines when distributing energy to households and businesses.
While distribution networks (interpreted as undirected graphs whose edges are power lines) offer many connections to the consumers not least to offer redundancy \cite{guo2023review},
they are typically operated in a radial configuration: among all available lines, the active ones must form a spanning tree.
For any such radial configuration, the production (negative demands) and consumption (positive demands) at the network vertices uniquely determine the flow on each active line, and hence the total power loss, which should be minimized.
Formally, this is captured by the following optimization problem.
\probDefOptimization{Distribution Network Reconfiguration (DNR)}
{A connected undirected graph $G = (V,E)$, a demand function $d \colon V \to \RR$ satisfying $\sum_{x \in V} d(x) = 0$, and a resistance function $r \colon E \to \RR_{\geq 0}$.}
{A spanning tree $T$ of $G$.}
{
  Minimize
  $\displaystyle L(T) \coloneqq \sum_{\{x,y\} \in E(T)} r(\{x, y\}) \left(\sum_{v \in V(T_{x \setminus y})} d(v)\right)^2$.
}
Herein, for an edge $\{x, y\}$, the trees $T_{x \setminus y}$ and $T_{y \setminus x}$ are the unique components obtained by removing $\{x, y\}$ from $T$, with $x \in V(T_{x \setminus y})$ and $y \in V(T_{y \setminus x})$.
Notably, it does not matter whether we pick $T_{x \setminus y}$ or $T_{y \setminus x}$ in $L_T$:
as $G$ is connected and the sum of the demands is $0$, we have $\sum_{v \in V(T_{x \setminus y})} d(v) = - \sum_{v \in V(T_{y \setminus x})} d(v)$.

The necessity of finding an energy-efficient radial configuration is evident.
A recent CEER report \cite{ceer2025power} illustrates the scale: across 33 European countries, reported distribution losses ranged from 1.95\% to 22.63\%, with a median of 6.31\%.\footnote{Because of accounting difficulties, these values should be understood as upper bounds.}
Moreover, temporal variability is becoming more pronounced as low-carbon technologies such as solar panels, electric vehicles, and heat pumps increase network utilization and make demand and generation patterns more irregular \cite{damianakis2025grid};
therefore,
modern distribution networks have remote-controlled switches, enabling frequent reconfiguration in response to load variations on short time scales \cite{behbahani2024comprehensive}.
In this temporally demanding setting, the challenge now is to quickly find an efficient reconfiguration.

A recent review \cite{behbahani2024comprehensive} reports that hundreds of papers on distribution network reconfiguration appear each year, with loss minimization as the most studied objective and most papers contributing heuristic approaches.
Curiously, while the first work on \scup{DNR} dates back to \citeyear{merlin1975search}~\cite{merlin1975search},
its \NP-hardness was only established in \citeyear{khodabakhsh2018submodular}~\cite{khodabakhsh2018submodular}.
More recently, \citet{gupta2022electrical,ito2026loss} devised the first (polynomial-time) approximation algorithms for the special case \scup{1-DNR}, where we have a single power source (i.e., one vertex with negative demand); among them an $\bigO(n)$-approximation for general graphs and a $\bigO(\sqrt{n})$-approximation for grid graphs with uniform resistances \cite{gupta2022electrical}.
Our work continues the study of the approximability for \scup{DNR}.

\subsection{Our contributions}

\begin{figure}[t]
    \centering
    
    \begin{tikzpicture}[
            yscale=-1,
            good/.style={
                draw=green!70!black,
                fill=green!10,
                thick,
                rectangle,
                align=center,
                inner sep=3pt,
                minimum width=4cm,
                font={\small},
            },
            bad/.style={
                draw=red!85!black,
                fill=red!7,
                thick,
                rectangle,
                align=center,
                inner sep=3pt,
                minimum width=4cm,
                font={\small},
            },
        ]

        \tikzmath{
            \w = 4;
            \xsep = 0.3;
            \ysep = 0.1;
            \titlesep = 0.7;
            \arrowsep = 0.7;
            \x1 = 0.5*\w;
            \x2 = 1.5*\w + \xsep;
            \xk = 2.5*\w + 2*\xsep;
            \doublew = 2*\w+\xsep;
        }

        \node (s1) at (\x1,0) {\Large \underline{\scup{1-DNR}}};
        \node (s2) at (\x2,0) {\Large \underline{\scup{2-DNR}}};
        \node (sk) at (\xk,0) {\Large \underline{\scup{DNR}}};

        \node[good] (first1) at (\x1,\titlesep) {$n$-approximation \cite{gupta2022electrical}};

        \node[good,minimum width=\doublew cm] (first2) at (2*\w+1.5*\xsep,\titlesep) {$n$-approximation (\cref{sub:multi-approx})};

        \coordinate (firstLow) at (\xk,|-first2.south);

        \node[bad,below] (second) at ([yshift=\ysep cm]firstLow) {No $n^{1-\varepsilon}$-approximation\\ even on planar graphs\\ (\cref{sub:multi-hardness})};

        \coordinate (secondLow) at (\x1,|-second.south);

        \node[good,below] (third) at (secondLow) {$\bigO(\sqrt{n})$-approximation\\ if resistances are \\ uniform (\cref{sub:single-approx})};

        \coordinate (thirdLow) at (\x2,|-third.south);

        \node[bad,below] (fourth) at (thirdLow) {No $(c \log^2 n)$-approx.\\ (\cref{sub:two-hardness})};

        \coordinate (fourthLow) at (\x1,|-fourth.south);

        \node[bad,below] (fifth) at (fourthLow) {No PTAS\\ (\cref{sub:single-hardness})};

        \coordinate (fifthLow) at (\x1,|-fifth.south);

        \node[bad,below] (sixth) at ([yshift=\ysep cm]fifthLow) {No FPTAS even on\\ grid graphs \cite{ito2024loss}};

        \draw [-{Latex[length=4mm]},line width=2pt] ([xshift=-\arrowsep cm]first1.north west) -- node[above,sloped] {Approximation Quality} ([xshift=-\arrowsep cm]sixth.south west);

    \end{tikzpicture}

    \caption[Illustration of the Steps Used to Prove NP-Hardness of $c \log^2 N$-Approximation]{%
        Approximation Landscape.
        The three columns show results for the single-source, the two-source, and the general \scup{DNR} problem.
    }
    \label{fig:overview}
\end{figure}

Our contributions can be divided into two parts; an overview is given in \cref{fig:overview}.

First, we continue the study of approximability of \scup{1-DNR} (\cref{sec:single}).
As our main result, we show that \scup{1-DNR} admits a polynomial-time $\bigO(\sqrt{n})$-approximation algorithm on general graphs with uniform resistances; thereby resolving an open question by \citet{gupta2022electrical}.
Our result makes use of an equivalent (known) flow formulation of \scup{DNR} and the fact that \scup{1-DNR} is equivalent to a \emph{confluent flow} problem.
This paves the way for a randomized rounding-based approach.
As soon as we allow non-uniform resistances, we can show \scup{1-DNR} to not admit a PTAS unless $\Poly = \NP$; this result holds even for binary resistances ($r(e) \in \{0,1\}$ for all $e \in E$) and polynomial demands.

Second, we extend the approximability analysis to \scup{DNR} with more than one source (\cref{sec:multi}).
This perspective is especially relevant today, as distributed energy resources that support decarbonization , such as solar panels and battery storage, are integrated into distribution networks \cite{navidi2023coordinating}.
Here, we show that any acyclic\footnote{By acyclic, we mean free of \emph{undirected} cycles.} flow routing the demand is an $n$-approximation for \scup{DNR}.
Moreover, we show that, even on planar graphs, a polynomial-time $n^{1-\eps}$-approximation algorithm for \scup{DNR} implies $\Poly=\NP$ for any $\eps > 0$, even for binary resistances and polynomial demands.
To obtain the hardness, we establish the \NP-hardness of a balanced graph partitioning problem on planar graphs which was conjectured by \citet{dyer1985complexity} in~\citeyear{dyer1985complexity}.
Finally, we provide an $\Omega(\log^2 n)$ approximation lower bound for \scup{2-DNR}, again conditioned on $\Poly \ne \NP$ and even for binary resistances and polynomial demands.

\subsection{Related work}

\paragraph*{Network reconfiguration.}
Motivated by the 1973 oil crisis, \textcite{merlin1975search} initiated the study of \scup{DNR}.
Early work established heuristics \cite{civanlar1988distribution,baran1989network,merlin1975search,shirmohammadi1989reconfiguration},
which was later complemented by various metaheuristic approaches \cite{nara1992genetic,chang1994network,su2005distribution,rao2013power,nguyen2015distribution},
as well as approaches based on mathematical programming \cite{aoki1987normal}, machine learning \cite{kim1993artificial}, and quantum computing \cite{silva2023quantum}.

Despite the sustained interest from the engineering and heuristics communities, theoretical work on \scup{DNR} remains comparatively sparse.
The first explicit hardness result was given by \textcite{khodabakhsh2018submodular}, showing that \scup{1-DNR} is strongly \NP-hard even for uniform demands and resistances, which rules out an FPTAS unless $\mathrm{P}=\mathrm{NP}$.
\textcite{gupta2022electrical}, in addition to their above-mentioned results, give a $2$-approximation algorithm on grid graphs with uniform demands and resistances.
\textcite{ito2026loss} improve the asymptotic factor from $2$ to $9/8$.
Moreover, they show that \scup{1-DNR} is weakly \NP-hard on grid graphs of height three, and strongly \NP-hard on arbitrary grid graphs.

Further results with guaranteed error bounds are given by
\textcite{inoue2014distribution}, who present an exponential-time algorithm for \scup{DNR},
and by
\textcite{khodabakhsh2018submodular}, who give an algorithm for \scup{1-DNR} whose guarantee is stated relative to an upper bound on the objective value rather than to the optimum.

\paragraph*{Transshipment and confluent flows.}
As we will show in \cref{sec:prelim}, \scup{DNR} is equivalent to the problem of finding an acyclic flow $f$ routing the demands so as to minimize $\sum_{\{x,y\}\in E} r(\{x, y\}) f(x,y)^2$ (assuming skew symmetry, i.e., $f(x, y) = -f(y, x)$).
\scup{DNR} therefore has two features that distinguish it from the \scup{Transshipment} problem,\footnote{Also known as \scup{Uncapacitated Min-Cost Flow}; see \cref{sub:multi-approx} for a definition.} which is solvable in almost-linear time \cite{chen2025maximum,brand2023deterministic}.
These features are the quadratic (instead of linear) costs and the acyclicity constraint.
Notably, one can evade the \NP-hardness by dropping one of the two distinguishing features:
Dropping the acyclicity constraint (i.e., allowing flow on all lines) yields the electrical flow problem, which can be solved (exactly) through convex quadratic programming~\cite{kozlov1980polynomial} and can be approximated to arbitrary precision in almost linear time~\cite{christiano2011electrical}.
Replacing the quadratic with linear costs is a well-known polynomial-time solvable variant of \scup{Transshipment}~\cite[Chapter 11.2]{ahuja1993network}.

A flow is \emph{confluent} if every vertex sends out flow over at most one edge.
Earlier work studied the problem of finding a confluent flow minimizing the congestion, i.e., the maximum of the flows over edges.
In particular, the foundational paper by \textcite{chen2006meet} gives a $\widetilde{\bigO}(\sqrt{n})$-approximation algorithm based on randomized rounding for uniform supplies and capacities, where capacities are the inverse of resistances in our model.\footnote{In electrical networks, the inverse of resistance is known as conductance.}
For arbitrary supplies and uniform capacities, \textcite{chen2007almost} essentially settle the approximability of minimum-congestion confluent flow: they present a $(1 + \ln k)$-approximation algorithm and prove NP-hardness within $\log_2 k$, where $k$ is the number of sink-adjacent vertices.
\scup{1-DNR} is equivalent to finding a confluent flow with a quadratic objective, which has not been studied previously to the best of our knowledge.

\subsection{An Engineer's Perspective}

This work is a study of \scup{DNR} in its most elementary form, matching the formalizations used by the above-mentioned theoretical works~\cite{gupta2022electrical,ito2024loss}.
We highlight three aspects in which this model differs from its real-world counterpart.
First, we replace a full load-flow model with a linearized, resistance-only surrogate: reactive power is neglected and voltages are treated as fixed at nominal, so that the flow across a line is determined solely by the downstream demand it serves.
Second, we account for losses in the objective but not in the flow-conservation constraints; the additional injection needed to compensate for a resistive loss, which would itself flow upstream and incur further losses, is a second-order term that we omit.
Third, we assume that every line is switchable, as is common in formulations of radiality-constrained DNR problems \cite{Lavorato2012,Wang2020}, whereas in practice only a fraction of the lines can be switched \cite{Celli2004}, so that only a subset of the spanning trees is realizable.
These simplifications mean that the inapproximability (but not the approximability) results presented in this work carry over to the real-world setting.

\section{Preliminaries}
\label{sec:prelim}

Let $G = (V, E)$ be an undirected, connected graph
with an electrical \emph{resistance} $r(e) \in \RR_{\ge 0}$ for each edge $e \in E$
and demands $d \colon V \to \RR$ satisfying $\sum_{x \in V} d(x)=0$.
Let $A \ceq \{(x,y) \in V \times V \mid \{x,y\} \in E\}$ be the set of oriented edges of $G$.
A \emph{flow} satisfying $d$ is an assignment $f \colon A \to \RR$ such that
\begin{align*}
    &\sum_{x \in N_G(y)} f(x,y) = d(y) && \text{for all $y \in V$ \emph{(flow conservation)}, and} \\
    &f(x,y) = -f(y,x) && \text{for all $(x,y) \in A$ \emph{(skew symmetry)}.}
\end{align*}
The \emph{support} $G_f = (V,E_f)$ of a flow $f$ is the subgraph of $G$ containing all edges with nonzero flow, i.e., $E_f \ceq \{\{x,y\} \in E \mid f(x,y) \neq 0\}$.
A flow $f$ is \emph{acyclic} if its support $G_f$ is acyclic.

The \emph{power loss} of a flow $f$ is
\begin{displaymath}
    L(f) \ceq \sum_{\{x,y\} \in E} r(\{x,y\}) \cdot \big( f(x,y) \big)^2.
\end{displaymath}
Note that the power loss is well-defined because $f(x,y)^2 = f(y,x)^2$ by skew symmetry.

Observe that a spanning tree $T$ of $G$ uniquely determines a flow $f_T$,
with
\[
    f_T(x,y) \ceq \sum_{v \in V(T_{y \setminus x})} d(v) = -\sum_{v \in V(T_{x \setminus y})} d(v)
\]
for every $\{x, y\} \in E(T)$, and $f_T(x, y) \coloneqq 0$ for every $\{x, y\} \notin E(T)$.
Note that $f_T$ satisfies the demands and its support is a subgraph of $T$, and $L(T) = L(f_T)$.

Conversely, the support $G_f$ of any acyclic flow $f$ can be extended to a spanning tree by adding edges to connect the components of $G_f$.

\section{Single-Source Networks}\label{sec:single}

\subsection{A Single-Source Approximation Algorithm}\label{sub:single-approx}

\textcite{gupta2022electrical} provide an $\bigO(\sqrt{n})$-approximation algorithm for \scup{1-DNR} on grid graphs with uniform resistances (and the source located in a corner).
They discuss the challenge of generalizing their result to planar graphs.
In this section, we provide an $\bigO(\sqrt{n})$-approximation algorithm that even works on general graphs.
We achieve this by a largely independent approach based on randomized rounding.

\begin{theorem}\label{thm:1dnr-tractability}
    There is a randomized $\bigO(\sqrt{n})$-approximation algorithm for \scup{1-DNR} with uniform resistances.
\end{theorem}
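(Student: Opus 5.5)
The plan is to round a fractional relaxation, in the spirit of the $\widetilde{\bigO}(\sqrt{n})$ rounding of \textcite{chen2006meet} for confluent flows. Let $s$ be the source, $D = -d(s)$, and assume $r \equiv 1$. By \cref{sec:prelim}, \scup{1-DNR} is the problem of finding an acyclic flow minimizing $L(f)=\sum_e f(e)^2$. With one source such a flow is \emph{confluent}: orient every edge away from $s$; then every vertex $v \ne s$ receives flow over at most one edge.

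\textbf{Step 1: relaxation.} Drop acyclicity and compute an optimal electrical flow $f^\ast$ by convex quadratic programming~\cite{kozlov1980polynomial}, so $L(f^\ast) \le \OPT$. Since $f^\ast$ is a potential flow ($f^\ast(x,y) = \phi(x)-\phi(y)$), it is acyclic as a directed graph. Orient it from $s$ and let $F_v$ be the total inflow into $v$.

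\textbf{Step 2: rounding.} Each vertex $v \ne s$ with $F_v>0$ independently picks a parent $u$ with probability $p_{uv}=f^\ast(u,v)/F_v$. Because the orientation is acyclic and the only source is $s$, following parents from any $v$ reaches $s$, so the parent pointers form an arborescence $T$. The backward walk from a sink $w$ that repeatedly jumps to the chosen parent has exactly the law of the standard path decomposition of $f^\ast$. Hence the subtree demand $T_v$ of $v$ satisfies $\EE[T_v]=F_v$. The edge $(u,v)$ is used with probability $p_{uv}$, independently of $T_v$, since $T_v$ depends only on the choices of vertices strictly below $v$. Therefore
\[
  \EE[L(T)] = \sum_v \EE[T_v^2] = \sum_v \bigl(F_v^2 + \mathrm{Var}(T_v)\bigr).
\]

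\textbf{Step 3: bound the first moments.} I would bound $\sum_v F_v^2$ against $L(f^\ast)$ using properties of the electrical optimum. The inflow into $v$ splits over $\deg^{-}(v)$ edges, which by itself only gives $F_v^2 \le \deg^{-}(v)\sum_{u} f^\ast(u,v)^2$. This is too weak, so I would add the lower bounds $\OPT \ge D^2/\deg(s)$ and $\OPT \ge \sum_w d(w)^2/\deg(w)$. I would also likely pre-process by splitting into heavy vertices (demand at least $D/\sqrt{n}$) and light vertices.

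\textbf{Step 4: the main obstacle, the second moments.} The key step is bounding $\mathrm{Var}(T_v)$. Two sinks' walks coalesce once they meet, which makes the indicators $[w \in T_v]$ positively correlated. My first attempt is to write
\[
  \EE[T_v^2] = \sum_{w,w'} d(w)\,d(w')\,\PP[w,w' \in T_v]
\]
and condition on the first meeting vertex of the two walks, following Chen et al.'s coalescence argument. Summed over $v$, I expect this to give
\[
  \sum_v \mathrm{Var}(T_v) \le \bigO(\sqrt{n}) \cdot \max\Bigl\{\sum_v F_v^2,\; \tfrac{D^2}{\deg(s)}\Bigr\}.
\]
The intended mechanism is to charge a vertex $v$ either to its own large load $F_v^2$, or to the at most $\sqrt{n}$ "levels" of heavy coalescence, using Cauchy--Schwarz in the form $\sum_v F_v \le \sqrt{n \sum_v F_v^2}$.

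If a direct variance bound fails, the fallback is a hybrid. Route light demand by the randomized rounding, and route heavy vertices, of which there are at most $\sqrt{n}$, along a shortest-path tree, whose cost is at most a $\sqrt{n}$ factor times $D^2/\deg(s) \le \OPT$ per level. Then take the cheaper of the two trees. Markov's inequality plus repetition converts the expectation bound into a randomized $\bigO(\sqrt{n})$-approximation.
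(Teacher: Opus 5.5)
\textbf{There is a genuine gap in Steps 1 and 3: the electrical flow is the wrong fractional solution to round.} Your own identity from Step 2 gives $\EE[L(T)] = \sum_v (F_v^2 + \mathrm{Var}(T_v)) \ge \sum_v F_v^2$. So the approach needs $\sum_v F_v^2 = \bigO(\sqrt{n})\cdot\OPT$ for the electrical throughputs $F_v$, and this is false.

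Consider the following instance with $r\equiv 1$ and $n=2k+2$:
\begin{itemize}
\item a source $s$ with demand $-k$, a hub $h$, and vertices $a_1,\dots,a_k$ and $w_1,\dots,w_k$;
\item edges $\{s,a_i\}$, $\{a_i,h\}$, $\{h,w_j\}$, $\{s,w_j\}$;
\item demand $1$ at every $w_j$ and $0$ at every $a_i$ and at $h$.
\end{itemize}
The tree consisting of all $\{s,w_j\}$, all $\{s,a_i\}$, and $\{a_1,h\}$ has loss $k$, so $\OPT\le k$.

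In the electrical flow, by symmetry each $w_j$ receives $\alpha$ directly and $1-\alpha$ via $h$. Equating potential drops gives $\alpha=2(1-\alpha)+(1-\alpha)$, so $\alpha=3/4$ and $F_h=k/4$. Hence $\sum_v F_v^2\ge k^2/16=\Omega(n)\cdot\OPT$. Your rounding sends about $k/4$ units through a single $s$--$a_i$--$h$ path, so it is an $\Omega(n)$-approximation on this instance no matter what Step 4 yields.

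None of the lower bounds you list can compensate: $L(f^\ast)=3k/4$, $D^2/\deg(s)=k/2$ and $\sum_w d(w)^2/\deg(w)=k/2$ are all $\Theta(k)$. Every vertex is light ($1<k/\sqrt{2k+2}$ for $k\ge 3$), so the heavy/light fallback never triggers. The fallback also does not say how two partial trees are merged into one spanning tree, nor why a shortest-path tree loses only a $\sqrt{n}$ factor ``per level''.

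\textbf{The missing idea is a vertex-centric relaxation.} Instead of $\sum_e f(e)^2$, minimize $\sum_v F_v^2$ directly over all splittable flows, i.e., the squared throughput of each vertex ($\sum_x (f^{\mathrm{out}}(x))^2$ in the paper's sink orientation). This is a convex QP whose value is at most $\OPT$, because in a tree flow each vertex's throughput equals the flow on its unique parent edge. Its optimum has no directed cycles, since cancelling a cycle strictly lowers the throughput of every vertex on it. Consequently your Step 2 rounding and the identity $\EE[T_v]=F_v$ go through verbatim, and Step 3 becomes trivial.

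What remains is Step 4, which you only assert. The paper proves $\EE[L(T)]\le\sqrt{n}\sum_v F_v^2$ directly, with no variance decomposition and no heavy/light split:
\begin{itemize}
\item Write $\EE[L(T)]=\sum_{w,w'}d(w)d(w')\,\EE[\lvert P_w\cap P_{w'}\rvert]$ and condition on the first meeting vertex $u$; the overlap then has expectation $\EE[\lvert P_u\rvert]$.
\item Cauchy--Schwarz together with $\PP(u\in P_u)=1$ gives $\EE[\lvert P_u\rvert]\le\sqrt{n}\sum_v\PP(v\in P_u)^2$. The right-hand side is $\sqrt{n}$ times the expected overlap of two \emph{independent} copies of $P_u$.
\item Two walks evolve independently until they first meet, so the first-meeting distribution is the same as for independent copies.
\end{itemize}
The bound therefore reassembles into $\sqrt{n}\sum_v(\EE[T_v])^2=\sqrt{n}\sum_v F_v^2\le\sqrt{n}\,\OPT$.
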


We will interpret \scup{1-DNR} as a confluent flow problem.
We say that a flow $f$ is \emph{confluent} if every vertex sends out flow over at most one edge, that is, for every $x \in V$, there is at most one $y \in N_G(x)$ with $f(x,y) > 0$.
In the literature on confluent flows, the standard assumption is that the network contains a single sink instead of a single source.
These settings are equivalent by multiplication of all demands by $-1$.
To match the terminology of the literature, we assume to have a single sink.

Next, we describe the close relationship between solutions to the \scup{1-DNR} problem and confluent flows.
In a spanning tree, each source has a unique path to the sink.
Hence, if a distribution network is in a spanning tree configuration, then each source sends out electrical current over exactly one edge, which means that the flow is confluent.

Conversely, each confluent flow becomes an acyclic flow by removing directed flow cycles; see the textbook by \textcite[Chapter 3.5]{ahuja1993network} for details on directed flow cycle removal.
Note that a flow cycle with opposed flow directions cannot occur in a confluent flow, since otherwise some vertex on the cycle would send out flow over two edges.
Moreover, directed flow cycle removal can be performed in polynomial time and only decreases the flow over the edges it modifies, which implies that the power loss does not increase.

When resistances are uniform ($r \equiv 1$), then \scup{1-DNR} minimizes the squared Euclidean norm of the flow on the edges.
Hence, under the above considerations, \scup{1-DNR} with uniform resistances is the problem of finding an optimal confluent flow under the squared Euclidean norm.

In contrast to confluent flows, we refer to general flows, which may split at vertices, as \emph{splittable}.
As noted by \textcite{gupta2022electrical}, the gap between the minimum power loss of a splittable flow and that of a confluent flow can be as large as $\Omega(n)$.
This presents a significant obstacle to the use of splittable-flow relaxations.
The same difficulty arises when minimizing the congestion, i.e., the $\ell_\infty$-norm \cite{lorenz2001how,chen2006meet}.
Addressing this, \textcite{chen2006meet} make a crucial observation regarding confluent flows: since the entire outflow of each vertex is carried by a single edge, the vertex outflows directly correspond to the edge flows.
This enables us to adopt a helpful vertex-centric view.

We remark that \textcite{chen2006meet} present a randomized rounding algorithm for the $\ell_\infty$-norm that closely resembles our algorithm.
The analyses, however, differ fundamentally: they bound higher moments of the rounded flow through each vertex, whereas we use a coupling argument.
It is not clear how to extend their analysis to the squared Euclidean norm.
Moreover, they assume uniform demands at the sources and obtain an approximation ratio that is larger by a polylogarithmic factor.

We proceed with the description of our algorithm.
Let $\III = (G = (V,E), d, r)$ be an instance of \scup{1-DNR} with $r \equiv 1$.
Let $t \in V$ be the single sink.
Given a flow $f$, the \emph{outflow} of a vertex $x \in V$ is
\begin{displaymath}
    f^\mathrm{out}(x) \ceq \sum_{y \in N_G(x)} \max(0, f(x,y)).
\end{displaymath}
If $f$ is a confluent flow, then the vertex outflows determine the loss:
\begin{displaymath}
    L(f)
    = \sum_{x \in V} \big( f^\mathrm{out}(x) \big)^2.
\end{displaymath}

Let $A \ceq \{(x,y) \in V \times V \mid \{x,y\} \in E\}$ be the set of oriented edges of $G$.
We consider the following splittable-flow relaxation:
\begin{align*}
    \text{minimize }& \sum_{x \in V} \big( f^\mathrm{out}(x) \big)^2 \\
    \text{subject to }& \sum_{x \in N_G(y)} f(x,y) = d(y) && \text{for all $y \in V$ (flow conservation), and} \\
    &f(x,y) = -f(y,x) && \text{for all $(x,y) \in A$ (skew symmetry),}
\end{align*}
where $f(x,y)$ is a variable for each $(x,y) \in A$, and
\begin{displaymath}
     f^\mathrm{out}(x) \ceq \sum_{y \in N_G(x)} \max(0, f(x,y)).
\end{displaymath}
Using a standard transformation, we obtain an equivalent quadratic program (QP).
More precisely, we make $f^\mathrm{out}(x)$ a variable for each $x \in V$, introduce a variable $f^\mathrm{max}(x,y)$ for each $(x,y) \in A$, and add the constraints
\begin{align*}
    &f^\mathrm{max}(x,y) \geq f(x,y) && \text{for all $(x,y) \in A$,} \\
    &f^\mathrm{max}(x,y) \geq 0 && \text{for all $(x,y) \in A$, and} \\
    &f^\mathrm{out}(x) \geq \sum_{y \in N_G(x)} f^\mathrm{max}(x,y) && \text{for all $x \in V$.}
\end{align*}

Let $f_\mathrm{split}$ be an optimal splittable flow solving the above QP relaxation.
Since the QP is convex, we can compute $f_\mathrm{split}$ in polynomial time \cite{kozlov1980polynomial}.
Moreover, $f_\mathrm{split}$ contains no directed flow cycles because of its optimality.

Next, we describe the randomized rounding procedure for obtaining a confluent flow from $f_\mathrm{split}$.
Let $S' \ceq \{x \in V \mid f_\mathrm{split}^\mathrm{out}(x) > 0\}$ be the set of vertices with positive outflow.
We define the graph $G' \ceq (V', A')$ where $V' \ceq S' \cup \{t\}$ and $A' \ceq \{(x,y) \in A \mid f_\mathrm{split}(x,y) > 0\}$.
Since $f_\mathrm{split}$ contains no directed flow cycles, $G'$ is a directed acyclic graph (DAG).
Moreover, each vertex of $S'$ has at least one outgoing arc in $G'$ because of flow conservation.
Note that every vertex that is contained in $G$ but not in $G'$ has an outflow of zero and hence zero demand.

To obtain a confluent flow, we select one outgoing arc for each vertex $x \in S'$: we choose the arc $(x,y) \in A'$ with probability
\begin{displaymath}
    p_x(y) \ceq \frac{f_\mathrm{split}(x,y)}{f_\mathrm{split}^\mathrm{out}(x)}.
\end{displaymath}
Note that $p_x$ is indeed a probability distribution.

Since $G'$ is a DAG, no cycles are formed: the chosen arcs form a spanning arborescence (i.e., a rooted tree in which every directed edge points to the root) of $G'$ directed toward the sink $t$.
This spanning arborescence induces a unique confluent flow $f_\mathrm{conf}$ over $G'$.
As $G$ and $G'$ differ only by demand-$0$ vertices, $f_\mathrm{conf}$ is also a confluent flow over $G$.
(Recall that every vertex must send out flow over \emph{at most} one edge.)
We return $f_\mathrm{conf}$ as the result.

\begin{framed}
    \begin{algo}[Summary]\label{alg:single}
        \emph{Input:} An instance $\III = (G=(V,E), d, r \equiv 1)$ of \scup{1-DNR}.
        \begin{enumerate}
			\item Compute an optimal splittable flow $f_\mathrm{split}$ solving the quadratic program corresponding to $\III$.
			\item Use randomized rounding to obtain a confluent flow $f_\mathrm{conf}$ from $f_\mathrm{split}$: for each vertex with positive outflow, choose one outgoing arc with probability proportional to the flow it carries.
			\item Return $f_\mathrm{conf}$.
		\end{enumerate}
    \end{algo}
\end{framed}

We regard the output of the algorithm as a random confluent flow $F_\mathrm{conf}$, where the randomness is in the selection of the outgoing arcs.
We claim that the rounding degrades the quality by a factor of at most $\sqrt{n}$ in expectation.
Formally, we argue that
\begin{displaymath}
    \EE \big[ L(F_\mathrm{conf}) \big]
    \leq \sqrt{n} \sum_{x \in V} \big( f_\mathrm{split}^\mathrm{out}(x) \big)^2
    \leq \sqrt{n} \OPT(\III),
\end{displaymath}
where the second inequality follows immediately because $f_\mathrm{split}$ solves the QP relaxation of~$\III$.
It remains to show the first inequality.

For each $x \in V'$, let $P_x$ denote the random path from $x$ to the sink $t$ formed by the selected outgoing arcs.
Thus, $F_\mathrm{conf}$ and $P_x$ are determined by the same random choices and thereby generally dependent.
We may interpret $P_x$ as a random walk taken by a unit of flow starting at vertex $x$.
We will use the linearity of expectation to decompose the expected loss $\EE \big[ L(F_\mathrm{conf}) \big]$ into pairwise contributions arising from dependent random paths.

For $v,x \in V'$, the indicator variable $\bm{1} [v \in P_x]$ is $1$ if $P_x$ visits $v$, and $0$ otherwise.
Using this notation, we have
\begin{displaymath}
    F_\mathrm{conf}^\mathrm{out}(v) = \sum_{x \in S'} -d(x) \bm{1} [v \in P_x]
\end{displaymath}
since each vertex $x \in S'$ contributes $-d(x)$ units of flow whenever $P_x$ visits $v$.
Squaring yields
\begin{align*}
    \big( F_\mathrm{conf}^\mathrm{out}(v) \big)^2
    &= \left( \sum_{x \in S'} -d(x) \bm{1} [v \in P_x] \right) \left( \sum_{x \in S'} -d(x) \bm{1} [v \in P_x] \right) \\
    &= \sum_{x \in S'} \sum_{y \in S'} d(x) d(y) \bm{1} [v \in P_x] \bm{1} [v \in P_y].
\end{align*}
This allows us to rewrite the expected loss as
\begin{align*}
    \EE [ L(F_\mathrm{conf}) ]
    &= \EE \left[ \sum_{v \in S'} \big( F_\mathrm{conf}^\mathrm{out}(v) \big)^2 \right] \\
    &= \EE \left[ \sum_{v \in S'} \sum_{x \in S'} \sum_{y \in S'} d(x)d(y) \bm{1} [v \in P_x] \bm{1} [v \in P_y] \right] \\
    &= \EE \left[ \sum_{x \in S'} \sum_{y \in S'} d(x)d(y) \sum_{v \in S'} \bm{1} [v \in P_x] \bm{1} [v \in P_y] \right] \\
    &= \sum_{x \in S'} \sum_{y \in S'} d(x)d(y) \EE \left[ \sum_{v \in S'} \bm{1} [v \in P_x] \bm{1} [v \in P_y] \right],
\end{align*}
where the last equality is due to the linearity of expectation.

Note that $\EE \left[ \sum_{v \in S'} \bm{1} [v \in P_x] \bm{1} [v \in P_y] \right]$ is the expected length of the overlap between~$P_x$ and $P_y$.
The challenge is that $P_x$ and $P_y$ are generally dependent.
We overcome this by conditioning on the first meeting point.
For all $x,y \in S'$, let $M_{x,y}$ be a random variable representing the vertex $v \in V'$ at which $P_x$ and $P_y$ meet for the first time.
If two random paths meet at the same vertex $v$, then they take the same route from $v$ onward; this is because each vertex of the rounded flow has at most one outgoing arc.
We denote the (random) length of a (random) path $P$ by $|P|$.
Using the law of total expectation, we get
\begin{align*}
    \EE \left[ \sum_{v \in S'} \bm{1} [v \in P_x] \bm{1} [v \in P_y] \right]
    &= \sum_{u \in V'} \PP(M_{x,y} = u) \EE \left[ \sum_{v \in S'} \bm{1} [v \in P_x] \bm{1} [v \in P_y] \middle| M_{x,y} = u \right] \\
    &= \sum_{u \in V'} \PP(M_{x,y} = u) \EE \left[ \sum_{v \in S'} \bm{1} [v \in P_u] \right] \\
    &= \sum_{u \in V'} \PP(M_{x,y} = u) \EE [ | P_u | ]
\end{align*}

We have
\begin{displaymath}
    \EE [ | P_u | ]
    = \EE \left[ \sum_{v \in S'} \bm{1} [v \in P_u] \right]
    = \sum_{v \in S'} \PP (v \in P_u).
\end{displaymath}
Squaring yields
\begin{displaymath}
    \EE [ | P_u | ]^2
    = \left( \sum_{v \in S'} \PP (v \in P_u) \right)^2
    \leq n \sum_{v \in S'} \PP (v \in P_u)^2,
\end{displaymath}
where the inequality is due to Cauchy-Schwarz.
By taking the square root, we get
\begin{displaymath}
    \EE [ | P_u | ]
    \leq \sqrt{n} \sqrt{\sum_{v \in S'} \PP (v \in P_u)^2}
    \leq \sqrt{n} \sum_{v \in S'} \PP (v \in P_u)^2,
\end{displaymath}
where the second inequality holds because of $\sum_{v \in S'} \PP (v \in P_u)^2 \geq \PP (u \in P_u)^2 = 1$ if $u \in S'$, and because of $\EE [ | P_t | ] = 0$ if $u = t$.

For each $x \in V'$, let $P_x'$ and $P_x''$ be independently sampled copies of $P_x$; they are independent of all other random paths and of each other.
We find that
\begin{align*}
    \EE [ | P_u | ]
    &\leq \sqrt{n} \sum_{v \in S'} \PP (v \in P_u)^2 \\
    &= \sqrt{n} \sum_{v \in S'} \PP (v \in P_u') \PP (v \in P_u'') \\
    &= \sqrt{n} \sum_{v \in S'} \EE \big[ \bm{1} [v \in P_u'] \big] \EE \big[ \bm{1} [v \in P_u''] \big] \\
    &= \sqrt{n} \EE \left[ \sum_{v \in S'} \bm{1} [v \in P_u'] \bm{1} [v \in P_u''] \right],
\end{align*}
where the last equality uses the independence of $P_u'$ and $P_u''$.

For $x,y \in S'$, let $M_{x,y}'$ be a random variable representing the vertex at which $P_x'$ and $P_y''$ first meet.
We claim that $\Pr(M_{x,y} = u) = \Pr(M'_{x,y} = u)$ for all $x,y \in S'$ and $u \in V'$.
The random paths $P_x$ and $P_y$ are dependent because they take the same route after they first meet.
The decision which outgoing arc to choose, however, is made independently at random for each vertex.
Hence, until their first meeting, $P_x$ and $P_y$ evolve independently, making their joint behavior indistinguishable from the pair $P_x'$, $P_y''$.
This shows that the claim holds.

We now derive a useful upper bound for the expected overlap of $P_x$ and $P_y$.
\begin{align*}
    \EE \left[ \sum_{v \in S'} \bm{1} [v \in P_x] \bm{1} [v \in P_y] \right]
    &= \sum_{u \in V'} \PP(M_{x,y} = u) \EE [ | P_u | ] \\
    &\leq \sqrt{n}\! \sum_{u \in V'} \PP(M_{x,y} = u) \EE \left[ \sum_{v \in S'} \bm{1} [v \in P_u'] \bm{1} [v \in P_u''] \right] \\
    &= \sqrt{n}\! \sum_{u \in V'} \PP (M_{x,y}' = u) \EE \left[ \sum_{v \in S'} \bm{1} [v \in P_x'] \bm{1} [v \in P_y''] \middle| M_{x,y}' = u \right] \\
    &= \sqrt{n} \EE \left[ \sum_{v \in S'} \bm{1} [v \in P_x'] \bm{1} [v \in P_y''] \right]
\end{align*}

To complete the proof, we need the following lemma, which states that the expected outflow of $F_\mathrm{conf}$ matches the outflow of $f_\mathrm{split}$ at each vertex.
\begin{lemma}\label{lem:splittable-flow}
    For each $v \in S'$, we have $\EE [F_\mathrm{conf}^\mathrm{out}(v)] = f_\mathrm{split}^\mathrm{out}(v)$.
\end{lemma}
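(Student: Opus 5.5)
We prove \cref{lem:splittable-flow} by expressing both sides as sums over sources and comparing the probability that a random path visits $v$ with the fraction of $x$'s flow that passes through $v$ in $f_\mathrm{split}$. By the formula above,
\begin{displaymath}
    \EE [F_\mathrm{conf}^\mathrm{out}(v)] = \sum_{x \in S'} -d(x)\, \PP(v \in P_x).
\end{displaymath}
So it suffices to show that $f_\mathrm{split}^\mathrm{out}(v) = \sum_{x \in S'} -d(x)\, \PP(v \in P_x)$.

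We argue by induction along a topological order of the DAG $G'$, processing each vertex $v$ after all of its in-neighbours in $G'$. The vertex $v$ is visited by $P_x$ exactly when $v = x$, or when $P_x$ visits some in-neighbour $w$ of $v$ in $G'$ and $w$ selects the arc $(w,v)$. These events are disjoint over $w$, because $P_x$ leaves each visited vertex by exactly one arc. Moreover, the choice made at $w$ is independent of whether $P_x$ reaches $w$: reaching $w$ depends only on the choices at vertices preceding $w$ in topological order. Hence
\begin{displaymath}
    \PP(v \in P_x) = \bm{1}[v = x] + \sum_{w \,:\, (w,v) \in A'} \PP(w \in P_x)\, p_w(v).
\end{displaymath}

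We multiply this identity by $-d(x)$ and sum over $x \in S'$. Using the induction hypothesis for each in-neighbour $w$, and $p_w(v) = f_\mathrm{split}(w,v)/f_\mathrm{split}^\mathrm{out}(w)$, the right-hand side becomes
\begin{displaymath}
    -d(v) + \sum_{w \,:\, (w,v) \in A'} f_\mathrm{split}(w,v).
\end{displaymath}
For $v \neq t$, the expression $\sum_{(w,v) \in A'} f_\mathrm{split}(w,v)$ is the total inflow of $v$, and flow conservation gives inflow $= d(v) + f_\mathrm{split}^\mathrm{out}(v)$. Here we use $\sum_{x \in N_G(v)} f(x,v) = d(v)$ together with skew symmetry to separate incoming from outgoing flow. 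The right-hand side therefore equals $f_\mathrm{split}^\mathrm{out}(v)$, which completes the induction. The base case consists of vertices with no in-arcs; there the sum is empty and the claim reduces to $-d(v) = f_\mathrm{split}^\mathrm{out}(v)$, which is again flow conservation.

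The main obstacle is justifying the independence step: reaching $w$ must be independent of the arc chosen at $w$. This holds because arcs are chosen independently per vertex, and $P_x$ visits each vertex at most once since $G'$ is acyclic. One should also double-check the sign conventions for $d$ in the sink setting, so that inflow minus outflow equals $d(v)$ and sources indeed have $d(x) < 0$. Beyond these two checks, the argument is linearity of expectation followed by the calculation above.
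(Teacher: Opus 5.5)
Your proof is correct and follows the same route as the paper: induction along a topological order of $G'$, a one-step recurrence obtained by conditioning on the arc chosen at each in-neighbour, and flow conservation at $v$. The only difference is that you run the recurrence per source on $\PP(v \in P_x)$ and then aggregate by linearity, whereas the paper applies it directly as $\EE[F_\mathrm{conf}^\mathrm{out}(v)] = -d(v) + \sum_{u} \EE[F_\mathrm{conf}^\mathrm{out}(u)]\, p_u(v)$; your version also spells out the independence step, which the paper uses without comment.
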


The proof of the lemma is by induction over a topological ordering of the vertices of $G'$; we defer it to the end of this section.
Finally, we find that
\begin{align*}
    \EE [L(F_\mathrm{conf})]
    &= \sum_{x \in S'} \sum_{y \in S'} d(x)d(y) \EE \left[ \sum_{v \in S'} \bm{1} [v \in P_x] \bm{1} [v \in P_y] \right] \\
    &\leq \sum_{x \in S'} \sum_{y \in S'} d(x)d(y) \sqrt{n} \EE \left[ \sum_{v \in S'} \bm{1} [v \in P_x'] \bm{1} [v \in P_y''] \right] \\
    &= \sqrt{n} \sum_{v \in S'} \EE \left[ \sum_{x \in S'\vphantom{y}} -d(x) \bm{1} [v \in P_x'] \right] \EE \left[ \sum_{y \in S'} -d(y) \bm{1} [v \in P_y''] \right] \\
    &= \sqrt{n} \sum_{v \in S'} \big( \EE [F_\mathrm{conf}^\mathrm{out}(v)] \big)^2 \\
    &= \sqrt{n} \sum_{v \in S'} \big( f_\mathrm{split}^\mathrm{out} (v) \big)^2 \\
    &\leq \sqrt{n} \OPT(\III).
\end{align*}

We conclude that \cref{alg:single} is a randomized $\bigO(\sqrt{n})$-approximation algorithm for \scup{1-DNR}.

\begin{proof}[Proof of \cref{lem:splittable-flow}]
    We use induction along a topological ordering of the vertices of $G'$.
    For the base case, let $v \in S'$ be a vertex without incoming arcs.
    Then, the outflow of $v$ equals~$-d(v)$, and hence $\EE [F_\mathrm{conf}^\mathrm{out}(v)] = -d(v) = f_\mathrm{split}^\mathrm{out}(v)$.

    We proceed to the induction step.
    Let $v \in S'$ and assume that the induction hypothesis holds for all predecessors of $v$.
    We find that
    \begin{align*}
        \EE [F_\mathrm{conf}^\mathrm{out}(v)]
        &= -d(v) + \sum_{u \colon (u,v) \in A'} \EE [F_\mathrm{conf}^\mathrm{out}(u)] p_u(v) \\
        &= -d(v) + \sum_{u \colon (u,v) \in A'} f_\mathrm{split}^\mathrm{out}(u) p_u(v) \\
        &= -d(v) + \sum_{u \colon (u,v) \in A'} f_\mathrm{split}(u,v) \\
        &= f_\mathrm{split}^\mathrm{out}(v).\qedhere
    \end{align*}
\end{proof}

\subsection{Single-Source Hardness}\label{sub:single-hardness}

In this section, we prove the following.

\begin{theorem}\label{thm:1dnr-hardness}
    \scup{1-DNR} is $\mathrm{APX}$-hard, even if each resistance is either $0$ or $1$ and all demands are polynomially bounded in the instance size.
\end{theorem}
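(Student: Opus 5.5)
We reduce from a bounded-occurrence APX-hard problem, and the natural candidate is \scup{Max 3-Dimensional Matching} or, more conveniently, \scup{Exact Cover by 3-Sets} in its optimization form. Concretely, we use APX-hardness of \scup{Max Set Packing} restricted to 3-sets where every element occurs in at most a constant number $B$ of sets, or \scup{Max Exact Cover}-style variants. The intended structure of the \scup{1-DNR} instance is as follows. The sink $t$ is connected by resistance-$1$ edges to one \emph{set vertex} $s_S$ per set $S$. Each element vertex $u_e$ has demand $1$ (treating $t$ as the source/sink with demand $-n_{\text{el}}$). It is joined by resistance-$0$ edges to the set vertices $s_S$ with $e \in S$. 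Since all other edges have resistance $0$, a spanning tree is, up to loss, an assignment of each element to one set containing it. The loss is then $\sum_S \ell_S^2$, where $\ell_S$ is the number of elements routed through $s_S$. Resistance-$0$ edges among the $s_S$ must be avoided: the set vertices are adjacent only to $t$ and to element vertices. A spanning tree may route flow from $s_S$ through an element vertex to another $s_{S'}$, but that merely merges loads and only increases the convex cost. Hence we can argue that an optimum, and any solution after a polynomial-time cleanup, is an assignment.

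Next we relate the objective to the packing size. Since $\sum_S \ell_S = N$ (the number of elements) is fixed, minimizing $\sum_S \ell_S^2$ rewards unbalanced loads; we instead want to reward sets that are exactly filled. To achieve this, we give each set vertex an extra demand of a constant $c$, attached through a gadget. Alternatively, and preferably, we reduce from a partition-type problem where all loads should equal $3$: \scup{Max Exact Cover by 3-Sets} where the number of sets is exactly $N/3$ after adding dummy structure. The preferred route is to force every set vertex to carry a base load, so that the cost is $\sum_S (b + \ell_S)^2$. Equivalently, we use the fact that with a fixed sum, $\sum_S \ell_S^2$ is determined by $\sum_S \binom{\ell_S}{2}$, and per-element choice is bounded. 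The cleaner plan is therefore to reduce from \scup{Max 2-Dimensional}-type \emph{balanced} assignment: use a 3-DM instance with $q$ triples per coordinate class in which every element lies in exactly $B$ triples. Each element of the first class $X$ becomes a demand-$3$ vertex adjacent only to its triples. Each element of $Y \cup Z$ becomes a demand-$1$ vertex adjacent to its triples. Costs are compared against the ideal load vector.

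The main computation is a gap lemma: if the instance has a matching of size $m$, then the loss is at most $\alpha N - \beta m$; conversely, from any tree of loss $L$ we can extract in polynomial time a matching of size at least $(\alpha N - L)/\beta'$, where $\alpha$, $\beta$, $\beta'$ are constants depending only on $B$. Bounded occurrence guarantees that $\OPT$ is $\Theta(N)$ for both problems, so this is an L-reduction, and APX-hardness follows. All demands are at most $\bigO(N)$, and all resistances are $0$ or $1$, as required.

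The hard step is designing the gadget so that the quadratic cost genuinely decreases with the number of perfectly matched triples rather than merely with imbalance. Convexity pushes toward spreading load evenly, which conflicts with the goal of concentrating elements on chosen triples. Our first attempt will be to exploit this convexity directly, by reducing from \scup{Max 3-Dimensional Matching}. Each triple vertex sits behind a resistance-$1$ edge to $t$, and the optimum spreads load so that each chosen triple gets exactly $3$ units. We then bound the surplus $\sum_S (\ell_S^2 - \ell_S)$ in terms of unmatched elements. If that fails, we fall back on a reduction from APX-hard \scup{Min Sum of Squares} load balancing with restricted assignment, which maps almost verbatim onto the bipartite construction above.
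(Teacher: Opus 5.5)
Your proposal does not yet contain a proof. The route you commit to (\scup{Max 3-Dimensional Matching} or set packing plus a gadget) rests on a gap lemma ``loss at most $\alpha N - \beta m$'' for which no gadget is given, and you flag this yourself as the unresolved step. Several of the concrete ideas you sketch along the way cannot work.

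With unit element demands, the bipartite instance is polynomial-time solvable. By your own cleanup argument an optimum is an assignment. Minimizing $\sum_S \ell_S^2$ over assignments of unit-demand elements is a min-cost flow problem with separable convex costs (replace each set-to-sink edge by parallel unit-capacity arcs of costs $1,3,5,\dots$), and such a problem has an integral optimum.

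Forcing a base load $b$ onto every set vertex changes nothing, because $\sum_S (b+\ell_S)^2 = \sum_S \ell_S^2 + 2bN + |\mathcal{S}|\, b^2$ whenever $\sum_S \ell_S = N$. Also, with the sum fixed, minimizing $\sum_S \ell_S^2$ favors \emph{balanced} loads, not unbalanced ones. So the hardness has to come from non-uniform demands, and your $3$-versus-$1$ encoding of \scup{3DM} remains an unverified sketch.

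The fallback in your last sentence is exactly the paper's proof, and it should be the whole argument rather than a backup. Reduce from \scup{QLB} (restricted assignment minimizing the sum of squared machine loads), which is APX-hard even with job weights in $\{1,3\}$ \cite{azar2004all}. The instance has a source $s$ with demand $-\sum_j w(j)$, machine vertices joined to $s$ by resistance-$1$ edges, and job vertices of demand $w(j)$ joined by resistance-$0$ edges to their available machines.

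The tree induced by an optimal assignment gives $\OPT(\III') \le \OPT(\III)$. Conversely, map a spanning tree $T$ to the assignment $\sigma$ in which $\sigma(j)$ is the vertex following $j$ on the $(j,s)$-path in $T$. Each job then lies in the same component of $T - s$ as its machine. Grouping machines by these components and using $\sum_h a_h^2 \le (\sum_h a_h)^2$ for nonnegative $a_h$ gives $C(\sigma) \le L(T)$. This inequality is precisely your ``merging loads only increases the convex cost'' observation. Together with the forward bound it yields an L-reduction with $\alpha = \beta = 1$, and no bounded-occurrence or $\OPT = \Theta(N)$ argument is needed.
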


We perform an L-reduction from the following problem.

\probDefOptimization{Quadratic Load Balancing (QLB)}
{A set $J$ of jobs, a set $M$ of machines, job weights $w \colon J \to \NN$, and a mapping $\mathcal{A} \colon J \to 2^M$ that assigns to each job $j \in J$ a nonempty set $\mathcal{A}(j)$ of available machines.}
{An assignment $\sigma \colon J \to M$ with $\sigma(j) \in \mathcal{A}(j)$ for every $j \in J$.}
{Minimize the cost $\displaystyle C(\sigma) \ceq \sum_{i \in M} \left( \sum_{j \colon \sigma(j) = i} w(j) \right)^2.$
}

The \scup{QLB} problem is $\mathrm{APX}$-hard even if each job weight is either $1$ or $3$ \cite{azar2004all}.
For an L-reduction, we must provide two polynomial-time mappings.
First, a ``forward'' mapping that maps a \scup{QLB} instance $\III$ to a \scup{1-DNR} instance $\III'$ such that $\OPT(\III') \leq \alpha \OPT(\III)$ for some constant $\alpha$.
Second, a ``backward'' mapping that maps a solution $T$ for $\III'$ to a solution $\sigma$ for $\III$ such that $C(\sigma) - \OPT(\III) \leq \beta (L(T) - \OPT(\III'))$ for some constant $\beta$.
We will prove these inequalities for $\alpha = \beta = 1$.

\begin{construction}[Forward Mapping]\label{cons:1dnr-forward}
    Let $\III = (J, M, w, \mathcal{A})$ be an instance of \scup{QLB}.
    We construct an instance $\III' = (G=(V,E), d, r)$ of \scup{1-DNR} as follows (see \cref{fig:single-source-reduction} for an illustration).

    \begin{figure}[t]
    \centering

    \begin{tikzpicture}[
            yscale=-1,
            circ/.style={fill,circle,inner sep=0pt,minimum size=2mm}
        ]

        \tikzmath{
            \w = 1.2;
            \h = 1.6;
            \margin = 5;
            \sgap = 0.2;
            \lgap = 0.3;
            \stub = 0.45;
        }
        \contourlength{0.3mm}

        \node[
            style=circ,
            label={north:$s$},
            label={[xshift=2mm,yshift=2mm]east:\footnotesize\color{gray} $\displaystyle -\sum_{j \in J} w(j)$}
        ] (s) at (0,0) {};

        \node[
            style=circ
        ] (m1) at (-3*\w,\h) {};

        \node at (-2*\w,\h) {$\cdots$};

        \node[
            style=circ,
            label={south:\small\contour*{white}{$i \in \mathcal{A}(j')$}}
        ] (m2) at (-1*\w,\h) {};

        \node at (0*\w,\h) {$\cdots$};

        \node[
            style=circ,
            label={south:\small\contour*{white}{$i' \in \mathcal{A}(j')$}}
        ] (m3) at (1*\w,\h) {};

        \node at (2*\w,\h) {$\cdots$};

        \node[
            style=circ
        ] (m4) at (3*\w,\h) {};

        \node[
            style=circ,
            label={south:$j$\vphantom{$j'$}},
            label={[yshift=-1em]south:\footnotesize\color{gray}$+w(j)$\vphantom{$w(j')$}}
        ] (j1) at (-3.5*\w,2*\h) {};

        \node at (-1.75*\w,2*\h) {$\cdots$};

        \node[
            style=circ,label={south:$j'$},
            label={[yshift=-1em]south:\footnotesize\color{gray}$+w(j')$}
        ] (j2) at (0,2*\h) {};

        \node at (1.75*\w,2*\h) {$\cdots$};

        \node[style=circ,label={south:$j''$},
            label={[yshift=-1em]south:\footnotesize\color{gray}$+w(j'')$}
        ] (j3) at (3.5*\w,2*\h) {};

        \begin{scope}[on background layer]
            \begin{scope}[overlay]
                \clip (-5*\w,\h-\stub) rectangle (5*\w,\h+\stub);

                \draw (m1.center) -- ++(85:2);
                \draw (m1.center) -- ++(50:2);

                \draw (m2.center) -- ++(120:2);
                \draw (m2.center) -- ++(95:2);

                \draw (m3.center) -- ++ (70:2);

                \draw (m4.center) -- ++ (150:2);
                \draw (m4.center) -- ++ (140:2);
                \draw (m4.center) -- ++ (105:2);
                \draw (m4.center) -- ++ (75:2);
            \end{scope}

            \begin{scope}[overlay]
                \clip (-5*\w,2*\h-\stub) rectangle (5*\w,2*\h+\stub);

                \draw (j1.center) -- ++(-25:2);
                \draw (j1.center) -- ++(-35:2);
                \draw (j1.center) -- ++(-50:2);

                \draw (j3.center) -- ++ (-140:2);
                \draw (j3.center) -- ++ (-160:2);
            \end{scope}

            \draw[ultra thick] (s.center) -- (m1.center);
            \draw[ultra thick] (s.center) -- (m2.center);
            \draw[ultra thick] (s.center) -- (m3.center);
            \draw[ultra thick] (s.center) -- (m4.center);
            \draw (m2.center) -- (j2.center);
            \draw (m3.center) -- (j2.center);
        \end{scope}

        \begin{scope}[overlay]
            \draw[thick,decoration={brace},decorate]
            (\margin,\lgap) -- node[right=5pt] {$E_s$} (\margin,\h-\lgap);

            \draw[thick,decoration={brace},decorate]
            (\margin,\h-\sgap) -- node[right=5pt] {$M$} (\margin,\h+\sgap);

            \draw[thick,decoration={brace},decorate]
            (\margin,\h+\lgap) -- node[right=5pt] {$E_\mathcal{A}$} (\margin,2*\h-\lgap);

            \draw[thick,decoration={brace},decorate]
            (\margin,2*\h-\sgap) -- node[right=5pt] {$J$} (\margin,2*\h+\sgap);
        \end{scope}
    \end{tikzpicture}

    \caption{%
        Illustration of \cref{cons:1dnr-forward}.
        Demands are shown in gray.
        Thick edges have unit resistance, while thin edges have zero resistance.
        The illustration shows example machines~$i$ and $i'$, as well as example jobs $j$, $j'$, and $j''$.
        The job $j'$ can be processed on either machine~$i$ or $i'$.
    }
    \label{fig:single-source-reduction}
\end{figure}

    Let $V \ceq \{s\} \cup J \cup M$.
    Let $E_s \ceq \{ \{s,i\} \mid  i \in M\}$, $E_\mathcal{A} \ceq \{ \{i,j\} \mid j \in J,\, i \in \mathcal{A}(j) \}$, and $E \ceq E_s \cup E_\mathcal{A}$.
    Moreover, let $d(j) \ceq w(j)$ for each $j \in J$, $d(i) \ceq 0$ for each $i \in M$, and $d(s) \ceq - \sum_{j \in J} w(j)$.
    Finally, let $r(e) \ceq 1$ if $e \in E_s$, and $r(e) \ceq 0$ otherwise.
\end{construction}

Note that $G$ is connected and that $\sum_{v \in V} d(v) = 0$.
Moreover, all demands are polynomially bounded in the instance size.
Let $\III$ be an instance of \scup{QLB}, and let $\III'$ be the corresponding instance of \scup{1-DNR}.
We show that $\OPT(\III') \leq \OPT(\III)$.
Let $\sigma^* \colon J \to M$ be an optimal assignment for $\III$.
Note that $T^* \ceq (V, E^*)$ with $E^* \ceq E_s \cup \{ \{i,j\} \mid j \in J,\, i = \sigma^*(j) \}$ is a spanning tree of $G$.
We find that
\begin{align*}
    L(T^*)
    &= \sum_{\{s,i\} \in E_s} \big( f_{T^*}(s,i) \big)^2
    = \sum_{\{s,i\} \in E_s} \left( \sum_{\{i,j\} \in E^* \cap E_\mathcal{A}} d(j) \right)^2 \\
    &= \sum_{i \in M} \left( \sum_{j \colon \sigma^*(j) = i} w(j) \right)^2
    = C(\sigma^*).
\end{align*}
Hence, $\OPT(\III') \leq L(T^*) = C(\sigma^*) = \OPT(\III)$.

\begin{construction}[Backward Mapping]
    Let $T \subseteq G$ be a solution to $\III'$.
    We construct a solution $\sigma \colon J \to M$ to $\III$ as follows.
    For each $j \in J$ there exists a unique $(j,s)$-path in $T$; let $\{j, i_j\}$ be the first edge on this path.
    Then, let $\sigma(j) \ceq i_j$ for each $j \in J$.
\end{construction}

Let $T = (V,E') \subseteq G$ be a solution to $\III'$, and let $\sigma \colon J \to M$ be the corresponding solution to $\III$.
We first show that $C(\sigma) \leq L(T)$.
Consider an edge $\{s,i\} \in E_s \cap E(T)$.
The removal of $\{s,i\}$ splits $T$ into two components; let $V_i$ be the vertex set of the component containing~$i$.
We get
\begin{align*}
    C(\sigma)
    &= \sum_{i \in M} \left( \sum_{j \colon \sigma(j) = i} w(j) \right)^2
    = \sum_{\{s,i\} \in E'} \, \sum_{h \in V_i \cap M} \left( \sum_{j \colon \sigma(j) = h} w(j) \right)^2 \\
    &\leq \sum_{\{s,i\} \in E'} \left( \sum_{h \in V_i \cap M} \,\, \sum_{j \colon \sigma(j) = h} w(j) \right)^2
    = \sum_{\{s,i\} \in E'} \left( \sum_{j \in V_i \cap J} w(j) \right)^2 \\
    &= \sum_{\{s,i\} \in E'} \left( f_T(s,i) \right)^2
    = L(T),
\end{align*}
where the second equality follows since each machine appears in exactly one component of $T - \{s\}$, and the third-to-last equality follows because $\sigma$ assigns each job to exactly one machine.
Using $\OPT(\III') \leq \OPT(\III)$, we conclude that $C(\sigma) - \OPT(\III) \leq L(T) - \OPT(\III')$.
Therefore, the mappings satisfy the required inequalities and \cref{thm:1dnr-hardness} holds.

\section{Multi-Source Networks}\label{sec:multi}

\subsection{A Multi-Source Approximation Algorithm}\label{sub:multi-approx}

We provide an $(n-1)$-approximation algorithm for \scup{DNR} by a relaxation to the \scup{Transshipment} problem, also known as \scup{Uncapacitated Minimum Cost Flow}.

\probDefOptimization{Transshipment}
{A connected undirected graph $G = (V,E)$, a demand function $d \colon V \to \RR$ satisfying $\sum_{x \in V} d(x) = 0$, and a cost function $c \colon E \to \RR_{\geq 0}$.}
{A flow $f$ satisfying $d$.}
{Minimize $\sum_{\{x,y\} \in E} c(\{x,y\}) \cdot \big| f(x,y) \big|$.}

It is well-known that every instance of the \scup{Transshipment} problem has an optimal solution that induces a forest and can be found in polynomial time \cite[Chapter~11.2]{ahuja1993network}.
Hence, the relevant difference between \scup{Transshipment} and \scup{DNR} lies in their objective functions: edge flows appear linearly in the former and quadratically in the latter.
We claim that the following algorithm is an $(n-1)$-approximation algorithm for \scup{DNR}.

\begin{framed}
    \begin{algo}\label{alg:general}
        \emph{Input:} An instance $\III = (G=(V,E), d, r)$ of \scup{DNR}.
        \begin{enumerate}
            \item Construct an instance $\III' = (G=(V,E), d, c)$ of \scup{Transshipment} that has $G$ and $d$ in common with $\III$ and has $c(e) \ceq \sqrt{r(e)}$ for each $e \in E$.
            \item Compute an optimal acyclic flow solution $f$ to $\III'$.
            \item Return $f$.
        \end{enumerate}
    \end{algo}
\end{framed}

\begin{theorem}\label{thm:general-tractability}
    \cref{alg:general} is an $(n-1)$-approximation algorithm for \scup{DNR}.
\end{theorem}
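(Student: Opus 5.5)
We compare both objectives through the Cauchy--Schwarz inequality on the support of an acyclic flow. For any flow $g$ satisfying $d$, write $\Phi(g) \ceq \sum_{\{x,y\}\in E} \sqrt{r(\{x,y\})}\,\abs{g(x,y)}$ for its \scup{Transshipment} cost in $\III'$. The plan is to prove two inequalities:
\begin{align*}
  L(f) &\le \Phi(f)^2 \quad\text{for the returned flow } f,\\
  \Phi(g)^2 &\le (n-1)\,L(g) \quad\text{for every acyclic flow } g.
\end{align*}
Together with the optimality of $f$ for $\III'$, applying the second inequality to an optimal acyclic \scup{DNR} flow $g^*$ gives
\[
  L(f) \le \Phi(f)^2 \le \Phi(g^*)^2 \le (n-1) L(g^*) = (n-1)\OPT(\III).
\]
By the preliminaries, optimal \scup{DNR} solutions correspond to acyclic flows: $L(T)=L(f_T)$, and an acyclic flow extends to a spanning tree with the same loss. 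So $\OPT(\III)$ is the minimum of $L$ over acyclic flows. Likewise, the returned acyclic $f$ yields a spanning tree $T$ with $f_T=f$ and $L(T)=L(f)$.

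The first inequality is immediate. All terms $\sqrt{r(e)}\abs{f(e)}$ are nonnegative, so
\[
  \sum_e r(e) f(e)^2 = \sum_e \bigl(\sqrt{r(e)}\abs{f(e)}\bigr)^2 \le \Bigl(\sum_e \sqrt{r(e)}\abs{f(e)}\Bigr)^2 .
\]
This holds for any flow, and in particular for $f$.

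For the second inequality, let $g$ be acyclic. Its support $G_g$ is a forest on $n$ vertices, so it has at most $n-1$ edges. Only those edges contribute to $\Phi(g)$, and Cauchy--Schwarz over at most $n-1$ terms gives
\[
  \Bigl(\sum_{e\in E_g}\sqrt{r(e)}\abs{g(e)}\Bigr)^2 \le \abs{E_g}\sum_{e\in E_g} r(e)g(e)^2 \le (n-1)L(g).
\]
The one point needing care is that the returned $f$ itself is acyclic and feasible for \scup{DNR}. This follows from the cited fact that \scup{Transshipment} admits an optimal solution inducing a forest, computable in polynomial time \cite[Chapter~11.2]{ahuja1993network}. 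The remaining subtlety is where acyclicity is used: it is needed only for the comparison flow $g^*$, in the $(n-1)$ bound. The returned flow needs acyclicity only to be feasible for \scup{DNR}, not for the bound $L(f)\le\Phi(f)^2$.
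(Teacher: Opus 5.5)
Your proposal is correct and matches the paper's proof step for step: the paper also defines $w_f(e) = \sqrt{r(e)}\,|f(x,y)|$, gets $L(f) \le C(f)^2 \le (n-1)L(f)$ from nonnegativity and Cauchy--Schwarz over the at most $n-1$ support edges, and chains $L(\hat{f}) \le C(\hat{f})^2 \le C(f^*)^2 \le (n-1)L(f^*)$. You also spell out two details the paper leaves implicit, namely that the first inequality needs no acyclicity and that an acyclic flow has the same loss as the spanning tree extending its support.
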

\begin{proof}
    Let $f$ be any acyclic flow over $G$ satisfying $d$.
    For every edge $\{x,y\} \in E$, define the auxiliary quantity $w_f(\{x,y\}) \ceq \sqrt{r(\{x,y\})} \cdot |f(x,y)|$.
    The cost of $f$ in the context of \scup{Transshipment} is $C(f) \ceq \sum_{e \in E} w_f(e)$, whereas the power loss is $L(f) = \sum_{\{x,y\} \in E} r(\{x,y\}) \cdot f(x,y)^2 = \sum_{e \in E} w_f(e)^2$.
    Since $f$ is acyclic, at most $n-1$ values of $w_f$ are nonzero.
    Hence, we have
    \begin{displaymath}
        \sum_{e \in E} w_f(e)^2
        \leq \left( \sum_{e \in E} w_f(e) \right)^2
        \leq (n - 1) \sum_{e \in E} w_f(e)^2,
    \end{displaymath}
    where the first inequality follows from the nonnegativity of $w_f$ and the second inequality is due to Cauchy-Schwarz.
    Equivalently, $L(f) \leq C(f)^2 \leq (n-1) \cdot L(f)$.
    Let $\hat{f}$ be an optimal acyclic \scup{Transshipment} flow, and let $f^*$ be an optimal \scup{DNR} flow.
    We find that $L(\hat{f}) \leq C(\hat{f})^2 \leq C(f^*)^2 \leq (n-1) \cdot L(f^*)$.
    Therefore, \cref{alg:general} is an ($n - 1$)-approximation algorithm.
\end{proof}

\subsection{Multi-Source Hardness}

We prove two inapproximability results for multi-source networks.
First, on planar graphs with an unrestricted number of sources, \scup{DNR} cannot be approximated within a factor of $n^{1-\eps}$ for any $\eps > 0$ unless $\mathrm{P} = \mathrm{NP}$.
We obtain this result by a reduction from a special case of \scup{Perfectly Balanced Connected Partition}, which we prove NP-hard beforehand.
Second, we consider networks containing only two sources: on general graphs, \scup{2-DNR} cannot be approximated within a factor of $c \log^2 n$ for some constant $c > 0$ unless $\mathrm{P} = \mathrm{NP}$.

\subsubsection{Hardness of Perfectly Balanced Connected Partition}\label{sub:bcp-hardness}

We prove a conjecture of \textcite{dyer1985complexity} on the complexity of partitioning the vertex set of a graph into equally sized connected parts.

\probDefDecision{Perfectly Balanced Connected Partition (BCP)}
{A graph $G = (V,E)$ and a positive integer $k$.}
{Is there a partition of $V$ into sets $V_1, \dots, V_k$ such that the induced subgraph $G[V_i]$ is connected and $|V_i| = |V| / k$ for each $i \in \{1, \dots, k\}$?}

\textcite{dyer1985complexity} established two separate hardness results for this problem.
First, they proved NP-hardness on planar graphs.
Second, they proved NP-hardness for every fixed number $k \geq 2$ of parts.
They conjectured that the problem remains NP-hard even when these two restrictions are imposed simultaneously.
Subsequent work continues to cite the hardness on planar graphs and for fixed $k \geq 2$ as separate facts \cite{enciso2009what,validi2022political,blavzej2024equitable}, suggesting that the conjecture is still open.
When vertices have weights, the problem is known to be weakly NP-hard for every fixed $k \geq 2$ on grid graphs of height three \cite{becker1998max} and on series-parallel graphs \cite{ito2006partitioning}.
We first prove the conjecture for $k = 2$.
The full conjecture then follows as a corollary and implies strong NP-hardness for the vertex-weighted setting.

\begin{theorem}\label{thm:bcp}
    \scup{BCP} is NP-hard on planar graphs with $k = 2$.
\end{theorem}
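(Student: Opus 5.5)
We reduce from a planar NP-complete problem whose solutions are naturally expressed as a balanced split. Concretely, we will encode an instance of \scup{Planar Monotone 3-SAT} or, more conveniently, \scup{Exact Cover by 3-Sets} restricted to planar incidence graphs (planar X3C is NP-complete). Given a planar X3C instance with universe $U$, $|U| = 3q$, and family $\mathcal{C}$ of triples, we build a planar graph $G$ and ask for a partition into two connected halves. The idea follows the classical ``weighted star'' gadget of \textcite{dyer1985complexity} for fixed $k$, but with weights simulated by long pendant paths (or dense planar blobs) so that planarity is preserved.

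\textbf{Construction.} Start from a planar embedding of the incidence graph $H$ of $(U,\mathcal{C})$. Replace each element $u \in U$ by a vertex carrying a pendant path of length $a$, and each set $C \in \mathcal{C}$ by a vertex carrying a pendant path of length $b$. Then add a root structure: a large planar ``anchor'' blob $R$ connected to all set vertices. To keep planarity, we do not connect $R$ directly to every set vertex. Instead, we route a cycle through the set vertices along the boundary of a face structure obtained from the embedding. Alternatively, we use a spanning tree of the dual to thread connections. Finally, we add a padding path of length $p$ attached to $R$. The parameters $a \gg b \gg |V(H)|$ are polynomial and chosen so that $|V|$ is even and the cardinalities force the following structure. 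One part $V_1$ contains $R$ together with exactly $q$ set gadgets in full. The other part $V_2$ must contain all element gadgets plus the remaining $|\mathcal{C}|-q$ set gadgets. Connectivity of $V_2$ then has to come through the chosen set vertices.

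Concretely, we arrange that element gadgets can only connect to the rest of their part through set vertices. We also arrange that a set vertex used as a connector for its elements must lie in the same part as those elements. The counting then shows that the sets whose vertices join the element side form an exact cover: each element needs at least one adjacent chosen set, and the budget of exactly $q$ sets (fixed by $b$) with $3q$ elements forces disjointness.

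\textbf{Steps, in order.} (1) Fix the planar X3C source and a planar embedding of $H$. (2) Add the anchor connections planarly, e.g. via a tree $R$ reaching every set vertex through faces; this may require subdividing edges with zero-weight-like short paths, whose sizes are absorbed in the slack. (3) Choose the magnitudes $a, b, p$ so that modular and size arguments pin down how many element and set gadgets each side contains. (4) Completeness: from an exact cover, exhibit the two connected halves explicitly. (5) Soundness: from a balanced connected 2-partition, show that pendant paths stay whole with their attachment vertex. This holds because a path split would create a disconnected piece unless the cut is at a single point, and such a cut is compensated by the size arithmetic. Then read off an exact cover.

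The main obstacle is step (2) combined with soundness: routing the anchor connections planarly without creating unintended alternative connections for the element side. Extra paths through $R$'s tree could let element gadgets connect without using a chosen set vertex. My first attempt is to make the anchor structure lie entirely in $V_1$ by giving it huge weight, and to separate it from element vertices in the embedding. I would then verify that any connection of elements in $V_2$ must pass through set vertices in $V_2$. If this fails, I would fall back to reducing from planar 3-SAT with a variable–clause cycle layout, where planarity of the anchor is standard.
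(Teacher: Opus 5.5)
Your proposal has a genuine gap: the anchor structure in step (2) is never specified, and the side-assignment you describe cannot be made into a valid reduction with a single anchor. As written it is inconsistent. $V_1$ receives $R$ and the $q$ set gadgets, yet the element side $V_2$ is supposed to be connected through the $q$ chosen set vertices, which by your own rule must lie in $V_2$. Neither consistent reading works.
\begin{itemize}
\item If the element side consists of the element gadgets plus the $q$ sets of an exact cover, it is disconnected. The chosen triples are pairwise disjoint and element vertices are adjacent only to set vertices, so the side induces $q$ vertex-disjoint stars, and completeness fails.
\item If instead it consists of the element gadgets plus the $|\mathcal{C}|-q$ unchosen sets, connectivity only requires those sets to cover $U$ and link up. This does not encode exact covers at all.
\end{itemize}
The X3C encoding really needs a connector on \emph{each} side that can reach an arbitrary subfamily of set vertices; classically these are two hubs adjacent to all set vertices. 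That is exactly what breaks planarity: a single element lying in three sets, together with two such hubs, already forms a $K_{3,3}$. You flag this as the ``main obstacle'' and offer a fallback to a different reduction, so the decisive step is missing. It is precisely the difficulty that kept the planar and the fixed-$k$ hardness results of Dyer and Frieze separate.

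The paper avoids this by starting from a source problem that already asks for two connected sides, namely planar \scup{2DCS}. It copies $G$ and attaches $100n^2$ pendant vertices to each red vertex and $10n$ to each blue vertex. The counts are adjusted at one red vertex $v_r$ and one blue vertex $v_b$ so that the red and blue pendant totals coincide. This forces every balanced connected bipartition to put all red vertices on one side and all blue vertices on the other. The slack caused by the white vertices is absorbed by a new $(v_r,v_b)$-path with $n$ internal vertices: a prefix of it joins the red side and the remaining suffix joins the blue side. The only planarity input needed is that the hard \scup{2DCS} instances of Gray et al.\ have a face whose boundary contains both a red and a blue vertex, and the path is drawn inside that face.

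As an aside, your pendant-path integrity argument should be this: a detached tail would have to be an entire part, which is impossible when its length is below $|V|/2$. The same argument constrains the length of your padding path.
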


We perform a polynomial-time reduction from a variant of the following problem, which asks whether a partial 2-coloring of a graph can be extended so that the vertices of each of the two colors induce a connected subgraph.

\probDefDecision{2-Disjoint Connected Subgraphs (2DCS)}
{A graph $G = (V,E)$ and disjoint sets $R,B \subseteq V$.}
{Are there disjoint sets $R', B' \subseteq V$ such that $R' \supseteq R$, $B' \supseteq B$, and both $G[R']$ and $G[B']$ are connected?}

We refer to the vertices in $R$ as red, the vertices in $B$ as blue, and the remaining vertices as white.
\textcite{gray2012removing} prove that \scup{2DCS} is NP-hard on planar graphs.
By inspecting their reduction, we find that the constructed instance is a connected graph with a planar embedding in which the boundary of some face contains both a red and a blue vertex.
(Each ``inverter gadget'' in their proof has such a face.)

\begin{fact}[\cite{gray2012removing}]
    \scup{2DCS} is NP-hard on connected planar graphs, even if each instance comes with a planar embedding containing a face whose boundary includes both a red and a blue vertex.
\end{fact}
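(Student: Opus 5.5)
The plan is to re-derive the Fact by going through the planar reduction of \textcite{gray2012removing} and checking the two extra properties on the instances it produces. We do not need a new reduction. NP-hardness of \scup{2DCS} on planar graphs is taken as given. What has to be shown is that every produced instance $(G,R,B)$ is connected, and that the natural planar embedding produced by the construction has a face whose boundary contains a vertex of $R$ and a vertex of $B$. Since both properties concern only the output of a polynomial-time map, the hardness transfers directly to the restricted class.

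\emph{Step 1 (connectivity).} Their reduction starts from a planar \scup{3-SAT} variant whose variable--clause incidence graph, together with a cycle through the variables, is planar and connected. It replaces variables, clauses and wires by gadgets and glues them at shared vertices along the edges of this incidence structure. We check that every gadget is internally connected and that gluing follows a connected skeleton; then $G$ is connected. If some formula happened to yield a disconnected skeleton, we first connect the formula's incidence graph. One way is to add dummy satisfiable clauses on fresh variables along the outer face, which preserves planarity and satisfiability. This keeps the map polynomial and answer-preserving.

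\emph{Step 2 (the red--blue face).} We locate an inverter gadget in the construction. Every nontrivial instance contains one, since literals of both signs must be propagated. We then verify from its drawing that there is a face bounded by a cycle passing through one of its fixed red terminals and one of its fixed blue terminals. The embedding used by Gray et al.\ is the one obtained by drawing gadgets in place of the planar skeleton, and gadgets are inserted without crossings. Hence this face of the gadget's local drawing remains a face of the global embedding, provided no other part of the graph is drawn inside it. This is the step I expect to be the main obstacle. It requires care that the face is a genuine face of the final embedding, not merely a cycle of the gadget, and that the endpoints are really precolored (in $R \cup B$), not white.

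\emph{Fallback.} If an instance contains no inverter, I would splice one into some wire. An inverter placed twice in series is answer-preserving. This forces the desired face while keeping planarity, connectivity and polynomial size.
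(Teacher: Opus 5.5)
Your proposal is correct and follows essentially the same route as the paper. The paper also justifies the Fact only by inspecting the planar reduction of Gray et al.: it observes that the constructed graph is connected and that every inverter gadget contains a face whose boundary has both a red and a blue vertex. Your connectivity patch and your fallback of splicing two inverters into a wire are extra caution the paper does not take; a formula needing no inverter could more simply be mapped to a fixed yes-instance with the property.
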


We use this as the source problem in our reduction.
The idea is to copy the graph of the \scup{2DCS} instance and simulate vertex weights by attaching pendant vertices (vertices of degree one) to red and blue vertices.
Red vertices receive many pendants, most blue vertices receive a moderate number of pendants, and white vertices receive none.
A distinguished red vertex $v_r$ and a distinguished blue vertex $v_b$ receive specially chosen numbers of pendants so that the total pendant weight of the red vertices equals that of the blue vertices.
This will force any feasible partition to separate the red and blue vertices.

\begin{construction}\label{cons:bcp}
    Let $\III = (G,R,B)$ be an instance of \scup{2DCS} where $G$ is a connected planar graph.
    Fix a planar embedding of $G$ with a face $f$ whose boundary contains both a red vertex $v_r \in R$ and a blue vertex $v_b \in B$.
    Construct an instance $\III' = (G',k)$ of \scup{BCP} with $k = 2$ as follows.

    Let $G'$ initially be a copy of $G$, and let $n \ceq |V(G)|$.
    Attach $100n^2$ pendant vertices to each vertex in $R \setminus \{v_r\}$.
    Attach $10n$ pendant vertices to each vertex in $B \setminus \{v_b\}$.
    Attach $100n^2 + |B| \cdot 10n$ pendant vertices to $v_r$, and $|R| \cdot 100n^2 + 10n$ pendant vertices to $v_b$.
    Finally, add a new $(v_r, v_b)$-path $P_{rb}$ with $n$ internal vertices, drawn in the interior of $f$.
\end{construction}

The graph $G'$ is planar by construction and has $|R| \cdot 200n^2 + |B| \cdot 20n + 2n$ vertices.
Hence, for any solution $V_1, V_2$ to $\III'$, we have $|V_1| = |V_2| = |R| \cdot 100n^2 + |B| \cdot 10n + n$.

For each original vertex $v \in V(G)$, let $\operatorname{pen}(v) \subseteq V(G')$ denote the pendant vertices attached to $v$ in the construction, and let $\operatorname{pen}[v] \ceq \operatorname{pen}(v) \cup \{v\}$.

\begin{observation}
    Let $V_1,V_2$ be any solution to $\III'$.
    For each vertex $v \in V(G)$, we have either $\operatorname{pen}[v] \subseteq V_1$ or $\operatorname{pen}[v] \subseteq V_2$.
\end{observation}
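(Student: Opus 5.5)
The approach rests on a single mechanism: a pendant vertex has degree one, so it can only sit in the same part as the vertex it hangs off. I will first check that every pendant $p \in \operatorname{pen}(v)$ must lie in the same part as $v$. Then I will check that $v$ lies in the same part as itself, so the containment is trivially consistent for $v$. Together these give the claim.

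Fix $v \in V(G)$ and a pendant $p \in \operatorname{pen}(v)$, and suppose $p \in V_i$. Since $G'[V_i]$ is connected, one of two things holds. Either $V_i = \{p\}$, or $p$ has a neighbor inside $V_i$.

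In the second case, the only neighbor of $p$ in $G'$ is $v$ (pendants have degree one, and no further edges are ever added at $p$). So $v \in V_i$.

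The first case cannot occur, because $|V_i| = |R| \cdot 100n^2 + |B| \cdot 10n + n \geq n \geq 1$. We need $|V_i| > 1$, which holds as soon as $n \ge 2$. Indeed $R$ and $B$ are nonempty, since the face $f$ contains a red vertex and a blue vertex, so $n \ge 2$ and even $|V_i| \ge 100n^2 > 1$.

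Hence every pendant of $v$ lies in the same part as $v$. Therefore $\operatorname{pen}[v]$ is entirely contained in whichever of $V_1, V_2$ contains $v$, which proves the observation.

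There is no real obstacle here. The only point needing care is the size argument ruling out a singleton part.
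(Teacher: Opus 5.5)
Your proof is correct and matches the paper's argument. A pendant's only neighbor is $v$, so a pendant separated from $v$ would be isolated in its part, which contradicts connectivity because every part has size $|R| \cdot 100n^2 + |B| \cdot 10n + n > 1$. Your explicit check that $|V_i| > 1$ (using $R \neq \emptyset$) just spells out what the paper asserts in one line.
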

\begin{proof}
    Assume for contradiction that $v \in V_1$ and some pendant vertex $p \in \operatorname{pen}(v)$ is contained in $V_2$.
    Then, $p$ is isolated in $G'[V_2]$.
    Since $|V_2| > 1$, we conclude that $G'[V_2]$ is not connected, a contradiction.
\end{proof}

\begin{lemma}
    Instance $\III = (G,R,B)$ of \scup{2DCS} is a \ttup{yes}-instance if and only if the instance $\III' = (G',k=2)$ of \scup{BCP} obtained from $\III$ is a \ttup{yes}-instance.
\end{lemma}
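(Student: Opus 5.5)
We prove the two directions separately; both rest on counting the vertices in the closed pendant neighbourhoods $\operatorname{pen}[v]$. Let $h \ceq |R| \cdot 100n^2 + |B| \cdot 10n + n$ denote the required part size. A direct count gives the total size of the red blocks:
\begin{displaymath}
  \sum_{v \in R} |\operatorname{pen}[v]| = (|R|-1)(100n^2+1) + 100n^2 + |B|\cdot 10n + 1 = h - n + |R|.
\end{displaymath}
Symmetrically, $\sum_{v \in B} |\operatorname{pen}[v]| = h - n + |B|$. The remaining vertices are the $n - |R| - |B|$ white vertices and the $n$ internal vertices of $P_{rb}$. They serve as ``filler'' to make the two parts exactly equal.

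\emph{Forward direction.} Let $R', B'$ be a solution to $\III$. Since $G$ is connected, we repeatedly add an unassigned white vertex adjacent to $R'$ or $B'$ to that set. This keeps both sets connected, and in the end $R' \cup B' = V(G)$.
\begin{itemize}
  \item Let $V_1$ consist of $\bigcup_{v \in R'} \operatorname{pen}[v]$ together with the first $n - |R'|$ internal vertices of $P_{rb}$, counted from $v_r$.
  \item Let $V_2$ consist of everything else, namely $\bigcup_{v \in B'} \operatorname{pen}[v]$ and the remaining $n - |B'|$ internal vertices of $P_{rb}$.
\end{itemize}
The path indeed has $(n-|R'|)+(n-|B'|) = n$ internal vertices, and $|R'| \le n$. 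By the count above, $|V_1| = h - n + |R'| + n - |R'| = h$, and likewise $|V_2| = h$. Connectivity follows because $G[R']$ is connected, each pendant is attached to a vertex of its own part, and each path segment hangs off $v_r \in R'$ or $v_b \in B'$, respectively.

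\emph{Backward direction.} Let $V_1, V_2$ be a solution to $\III'$, and assume $v_b \in V_2$. By the Observation, each block $\operatorname{pen}[v]$ lies entirely in one part. The argument has three steps.
\begin{enumerate}
  \item No red vertex lies in $V_2$. Otherwise $|V_2| \ge (|R|\cdot 100n^2 + 10n + 1) + (100n^2 + 1)$, which exceeds $h$ because $|B| \cdot 10n + n \le 10n^2 + n < 100n^2$. Hence $R \subseteq V_1$.
  \item No blue vertex lies in $V_1$. By step 1, $V_1$ already contains $h - n + |R|$ vertices from the red blocks. Adding a blue block $\operatorname{pen}[v]$ with $v \ne v_b$ contributes another $10n+1$ vertices, exceeding $h$. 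Hence $B \subseteq V_2$.
  \item Set $R' \ceq V_1 \cap V(G)$ and $B' \ceq V_2 \cap V(G)$. These sets are disjoint and contain $R$ and $B$, respectively.
\end{enumerate}

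It remains to show that $G[R']$ is connected; the argument for $B'$ is symmetric. We obtain $G[R']$ from the connected graph $G'[V_1]$ by deleting vertices that are not cut vertices in the relevant sense:
\begin{itemize}
  \item Pendant vertices are leaves, so deleting them preserves connectivity.
  \item Since $v_b \notin V_1$, the internal path vertices in $V_1$ attach to the rest of $V_1$ only through $v_r$. They therefore form a dangling subpath (or several dangling pieces, if $V_1$ were disconnected, which it is not), whose deletion also preserves connectivity.
\end{itemize}

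The main obstacle is the size bookkeeping in the backward direction: the inequalities $10n^2 + n < 100n^2$ and the $10n+1$ surplus must be checked carefully so that mixing colours is impossible. The remaining connectivity and planarity claims are routine, since planarity of $G'$ was already established via the face $f$.
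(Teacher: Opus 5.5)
Your proposal is correct and follows essentially the same route as the paper: the forward direction extends $R',B'$ to a connected partition of $V(G)$ and splits $P_{rb}$ between the two sides, and the backward direction uses the Observation plus pendant-counting contradictions to force the color separation. The differences are minor. For the blue vertices you bound $|V_1|$ from below via the exact red-block count $h-n+|R|$ instead of bounding $|V_2|$ from above. You also explicitly check that $G[R']$ and $G[B']$ are connected: pendants are leaves, and the path vertices in a part hang off $v_r$ (respectively $v_b$). The paper leaves this connectivity check implicit.
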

\begin{proof}
    ($\Rightarrow$)
    Let $R',B'$ be a solution to $\III$ and arbitrarily extend $R'$ and $B'$ to a partition $R'',B''$ of $V(G)$ such that $G[R'']$ and $G[B'']$ are connected.
    We construct a partition $V_1, V_2$ of $V(G')$.
    Initially, let $V_1 \ceq \bigcup_{v \in R''} \operatorname{pen}[v]$ and $V_2 \ceq \bigcup_{v \in B''} \operatorname{pen}[v]$.
    This leaves only the interior vertices of the path $P_{rb}$ unassigned.
    We assign to $V_1$ the $|B''|$ interior vertices of $P_{rb}$ that are closest to $v_r$, and to $V_2$ the $|R''|$ interior vertices of $P_{rb}$ that are closest to $v_b$.
    Note that both $G'[V_1]$ and $G'[V_2]$ are connected.
    Moreover, $V_1$ contains $|B''|$ vertices of $P_{rb}$, $|R''|$ vertices of $G$, and all pendants attached to red vertices.
    Hence, we have
    \begin{align*}
        |V_1|
        &= |R''| + |B''| + (|R| - 1) \cdot 100n^2 + 100n^2 + |B| \cdot 10n \\
        &= n + |R| \cdot 100n^2 + |B| \cdot 10n \\
        &= |V(G')|/2.
    \end{align*}
    Since $V_1, V_2$ is a partition of $V(G')$, this implies that $|V_1| = |V_2|$.
    Thus, $V_1, V_2$ is a solution to $\III'$.

    ($\Leftarrow$)
    Let $V_1, V_2$ be a solution to $\III'$ and assume without loss of generality (by symmetry) that $v_b \in V_2$.
    First, we show that every red vertex is contained in $V_1$.
    Suppose that there is a red vertex $v \in R$ with $v \in V_2$.
    Then, we obtain the contradiction
    \begin{displaymath}
        |V_2|
        \geq |\operatorname{pen}[v_b]| + |\operatorname{pen}[v]|
        > |R| \cdot 100n^2 + 100n^2
        > |R| \cdot 100n^2 + |B| \cdot 10n + n
        = |V_2|,
    \end{displaymath}
    where we use $|B| \leq n$ in the third inequality.
    
    Next, we show that every blue vertex is contained in $V_2$.
    Assume for contradiction that there is a blue vertex $v \in B$ with $v \notin V_2$.
    Hence, apart from $v_b$, at most $|B| - 2$ blue vertices are contained in $V_2$.
    Moreover, $G'$ contains at most $2n$ non-pendant vertices, and recall that $V_2$ contains no red vertices.
    We obtain the contradiction
    \begin{align*}
        |V_2|
        &\leq |\operatorname{pen}(v_b)| + (|B| - 2) \cdot 10n + 2n \\
        &= |R| \cdot 100n^2 + (|B| - 1) \cdot 10n + 2n \\
        &< |R| \cdot 100n^2 + |B| \cdot 10n + n \\
        &= |V_2|.
    \end{align*}
    
    In summary, every red vertex lies in $V_1$, and every blue vertex lies in $V_2$.
    Now, let $R' \ceq V_1 \cap V(G)$ and $B' \ceq V_2 \cap V(G)$, and note that $R', B'$ is a solution to $\III$.
\end{proof}

This concludes the proof of \cref{thm:bcp}.
The NP-hardness for every fixed $k \geq 2$ follows by a simple reduction from the $k = 2$ case first used by \textcite{dyer1985complexity}.
We sketch it here.

\begin{corollary}
    \scup{BCP} is NP-hard on planar graphs for every fixed $k \geq 2$.
\end{corollary}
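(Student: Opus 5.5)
We reduce from the case $k=2$ of \cref{thm:bcp}. Given a planar instance $(G,2)$ with $n \ceq |V(G)|$ (even), we build a planar instance $(G',k)$. Fix a vertex $u \in V(G)$. For $i = 1,\dots,k-2$, add a new path $Q_i$ on $n/2$ vertices, and join one endpoint of each $Q_i$ to $u$ by a single edge. This graph stays planar, since we only attach trees at a vertex. We could instead attach the paths along a single chain to keep degrees small, but that is not needed. Now $|V(G')| = n + (k-2)n/2 = kn/2$, so each part must contain exactly $n/2$ vertices.

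For the forward direction, take a solution $V_1,V_2$ of $(G,2)$ and extend it by the parts $V(Q_1),\dots,V(Q_{k-2})$. Every part is connected and has size $n/2$.

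The backward direction is the main point. We must show that every solution of $(G',k)$ uses each $Q_i$ as a whole part. Fix a part $W$ that contains the far endpoint of $Q_i$. Since $W$ is connected, it contains a subpath of $Q_i$ starting at that endpoint. If $W$ left $Q_i$, it would have to contain all of $V(Q_i)$ together with $u$, giving $|W| > n/2$, a contradiction. So $W \subseteq V(Q_i)$, and because $|W| = n/2 = |V(Q_i)|$, we get $W = V(Q_i)$. These $k-2$ parts are distinct, because they are disjoint. The two remaining parts therefore partition $V(G)$ into connected sets of size $n/2$; their connectivity is in $G'$, but it carries over to $G$ because the induced subgraph is the same. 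This yields a solution of $(G,2)$.

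The only subtle step is the backward argument: we must make sure no part can absorb part of a path $Q_i$ together with vertices of $G$. The size count above rules this out, because any part that crosses from $Q_i$ into $G$ must contain the entire path plus $u$. The construction is clearly polynomial for fixed $k$ (in fact even when $k$ is part of the input), and planarity is preserved, so NP-hardness follows for every fixed $k \ge 2$.
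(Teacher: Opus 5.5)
Your proof is correct and follows the paper's approach: pad $G$ with $(k-2)\cdot n/2$ new path vertices hanging off a single vertex of $G$, so that size counting plus connectivity through the attachment vertex forces the padding to form exactly $k-2$ whole parts. The paper attaches one long path (your ``single chain'' alternative, which needs a short peeling argument), whereas your $k-2$ separate paths force each padding part directly. Assuming $n$ even is harmless, since odd $n$ gives a trivial \ttup{no}-instance; the paper instead uses $\lfloor n/2 \rfloor$.
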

\begin{proof}
    Given an instance $(G,2)$ of \scup{BCP} with $n \ceq |V(G)|$, attach a path consisting of $(k - 2) \cdot \lfloor n / 2 \rfloor$ new vertices to some vertex of $G$.
    Let $G'$ be the resulting graph.
    Then, any feasible partition of $G'$ into $k$ equally sized parts must consist of two parts dividing the original graph $G$ and $k - 2$ parts dividing the new path.
    Hence, $(G',k)$ is equivalent to $(G,2)$.
\end{proof}

This resolves the conjecture of \textcite{dyer1985complexity}.
For the reduction to \scup{DNR} in the next section, we consider an extension of \scup{BCP} in which each part must contain a prescribed terminal vertex: the input contains pairwise distinct terminal vertices $v_1, \dots, v_k$, and a feasible partition $V_1, \dots, V_k$ must satisfy $v_i \in V_i$ for each $i \in \{1, \dots, k\}$.
We obtain the following corollary.

\begin{corollary}\label{cor:bcp-terminals}
    The prescribed-terminal variant of \scup{BCP} is NP-hard on connected planar graphs with $k = 2$, even if the two terminals lie on the boundary of the outer face.
\end{corollary}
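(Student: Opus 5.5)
We reuse \cref{cons:bcp} and take $v_r$ and $v_b$ as the two terminals. The proof of \cref{thm:bcp} already shows, in the ($\Leftarrow$) direction, that every solution $V_1,V_2$ to $\III'$ puts all red vertices in one part and all blue vertices in the other. In particular $v_r$ and $v_b$ lie in different parts, so after relabeling $v_r \in V_1$ and $v_b \in V_2$. Conversely, the solution built in the ($\Rightarrow$) direction already satisfies $v_r \in V_1$ and $v_b \in V_2$. Hence $(G',2)$ is a \ttup{yes}-instance of \scup{BCP} if and only if $(G',2,v_r,v_b)$ is a \ttup{yes}-instance of the prescribed-terminal variant. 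By the lemma above, this is equivalent to $\III$ being a \ttup{yes}-instance of \scup{2DCS}. The graph $G'$ is connected because $G$ is connected and pendants and $P_{rb}$ only attach to it.

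It remains to place both terminals on the outer face. Both $v_r$ and $v_b$ lie on the boundary of the face $f$ of the embedding of $G$. The path $P_{rb}$ is drawn inside $f$ and splits it into two faces, and both new faces still contain $v_r$ and $v_b$ on their boundaries. Pendant vertices can be drawn inside any face incident to their attachment vertex, so they do not change which original vertices lie on which face. So $G'$ has a planar embedding with a face whose boundary contains both terminals. Every planar embedding can be re-embedded, for instance by stereographic projection, so that any chosen face becomes the outer face. We re-embed so that one of the two faces created from $f$ is the outer face; this gives the claim.

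The only point needing care is that the terminal constraint adds no restriction beyond the unlabeled problem. This is fully covered by the separation argument already in the proof of \cref{thm:bcp}, so no step is expected to be a real obstacle.
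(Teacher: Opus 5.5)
Your proposal is correct and follows essentially the same route as the paper. You reuse the separation argument from the ($\Leftarrow$) direction of the lemma to place $v_r$ and $v_b$ in different parts, and you use that $v_r$ and $v_b$ still share a face after $P_{rb}$ is drawn, which can then be made the outer face; your version simply spells out the face splitting and the pendant placement in more detail than the paper's three-line proof.
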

\begin{proof}
    We have seen that every solution to the instance obtained by \cref{cons:bcp} separates the red and the blue vertices.
    Hence, the problem remains NP-hard if we prescribe $v_r$ and $v_b$ as terminals.
    The constructed graph is planar and connected, and since $v_r$ and $v_b$ lie on the boundary of a common face even after adding the path $P_{rb}$, we may choose an embedding where this face is the outer face.
\end{proof}

\subsubsection{Hardness of Multi-Source Distribution Network Reconfiguration}\label{sub:multi-hardness}

Next, we show that, for every $\eps > 0$, \scup{DNR} on planar graphs cannot be approximated within a factor of $n^{1-\eps}$ unless $\mathrm{P} = \mathrm{NP}$.
First, we prove the following somewhat weaker theorem, which establishes inapproximability for every constant approximation factor.
We will later derive the stronger statement as a corollary.

\begin{theorem}\label{thm:constant-factor-hardness}
    For every $\rho > 1$, there is no $\rho$-approximation algorithm for \scup{DNR} on planar graphs unless $\mathrm{P} = \mathrm{NP}$.
    This holds even if each resistance is either $0$ or $1$ and all demands are polynomially bounded in the instance size.
\end{theorem}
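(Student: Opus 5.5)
We reduce from the prescribed-terminal variant of \scup{BCP} with $k=2$ on connected planar graphs (\cref{cor:bcp-terminals}), and turn it into a gap instance of \scup{DNR}. The intended picture: every original vertex is a consumer of unit demand, the two terminals are fed by sources, and a perfectly balanced connected partition corresponds to a spanning forest in which each of many sources serves exactly its share. Any deviation from perfect balance then forces nonzero flow over an expensive (resistance-$1$) edge. The remaining edges get resistance $0$, so balanced solutions cost $0$ or almost nothing. A $\rho$-approximation must therefore find a near-zero solution and thus decide \scup{BCP}. Since a ratio test against an optimum of exactly $0$ is degenerate, we add a small unavoidable positive baseline cost $B$. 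We then show that YES instances have optimum $B$ and NO instances have optimum at least $\rho B$.

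The construction is as follows. Let $(G,v_1,v_2)$ have $n$ vertices with terminals on the outer face, and set $N=n/2$. Give every vertex of $G$ demand $+1$, and let all edges of $G$ have resistance $0$. Add a source $s$ outside the outer face, adjacent to $v_1$ and $v_2$ by edges of resistance $1$, with $d(s)=-n$. This keeps the graph planar because $v_1,v_2$ lie on the outer face. A spanning tree is then either a tree using only one of the $s$-edges, with loss $n^2$, or it splits $G$ into the component at $v_1$ and the component at $v_2$ with loss $a^2+(n-a)^2$, where $a$ is the number of vertices served through $v_1$. The split option always exists and is optimal when balanced. The optimum is $2N^2=n^2/2$ iff a perfectly balanced connected partition with $v_i$ in separate parts exists. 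Otherwise it is at least $(N-1)^2+(N+1)^2=n^2/2+2$. This gap is only additive, so we amplify. We place a small resistance-$1$ \emph{cut} edge and attach large zero-resistance private sources. Concretely, give $v_1$ and $v_2$ each demand $-N$ through a resistance-$0$ pendant source, replace $s$ by nothing, and connect $v_1$ and $v_2$ through an extra resistance-$1$ edge $e^*$ drawn in the outer face. The baseline is then $0$ exactly when a balanced partition exists. To restore a positive baseline, add one more unit source/sink pair forced across a single resistance-$1$ edge, contributing exactly $1$. The result is $\OPT=1$ in YES instances. In NO instances, any tree carries at least $1$ unit of imbalance across $e^*$, or across the original structure. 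Scaling all $G$-demands by a factor $c=\lceil\sqrt{\rho}\rceil+1$ (polynomial) makes that imbalance at least $c$, so the loss is at least $c^2>\rho$ while YES stays $1$. Resistances stay binary, demands stay polynomial, and planarity is preserved.

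The key steps are, in order: (i) verify that every spanning tree of the new instance either passes no flow through $e^*$ and induces a connected partition of $G$ with each terminal's side of size exactly $N$, or incurs loss at least $c^2$ on resistance-$1$ edges; (ii) verify that YES instances give a tree with loss exactly $1$, by taking the partition's spanning trees on each part, the pendant sources, the baseline edge, and $e^*$ closed as a non-tree edge; (iii) conclude that a $\rho$-approximation distinguishes loss $\le\rho$ from loss $\ge c^2$.

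The main obstacle is step (i). A spanning tree of $G'$ must still connect everything, and the tree path between the two sides may run through $G$ itself via resistance-$0$ edges rather than through $e^*$. Flow can then move between the two source regions at zero cost, so imbalance would be free. To prevent this, I would instead make \emph{all} edges of $G$ carry resistance $1$ only on a designated gadget, which is hard. A cleaner alternative is to keep the resistance-$0$ edges of $G$ but route each terminal's supply into $G$ only through resistance-$1$ edges from its source. In a tree, the vertices of $G$ are partitioned by which source edge their unique source path uses. The flow on the edge from source $i$ then equals $c\cdot|V_i|-$supply, with a unique zero-resistance interconnection. I would first try this formulation: sources $s_1,s_2$ with $d(s_i)=-cN$ joined to $v_i$ by a resistance-$1$ edge, and check that the loss is $c^2(|V_1|-N)^2+c^2(|V_2|-N)^2$ plus the baseline, with connectivity of the parts following from the tree structure.
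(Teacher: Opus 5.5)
There is a genuine gap. Step (i) fails for every construction you actually write down, and the proposal ends without a working one. You correctly spot the obstacle: a spanning tree can join the $v_1$-side and the $v_2$-side through resistance-$0$ edges of $G$. Your ``cleaner alternative'' does not get around it.

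If $s_i$ is adjacent only to $v_i$, then $s_i$ is a leaf of every spanning tree. So $\{s_i,v_i\}$ always carries exactly $-d(s_i)$ units, and the loss on the resistance-$1$ edges is the same for all trees; there is no imbalance term at all. Your claimed loss $c^2(|V_1|-N)^2+c^2(|V_2|-N)^2$ would require the resistance-$1$ edges to cut $V_i\cup\{s_i\}$ off from the rest of the tree. That needs a second connection between the two sides outside $G$, such as a hub or your edge $e^*$. But then the tree that spans $G$ internally, and either omits $e^*$ or hangs the hub off a single edge carrying no flow, has loss equal to the baseline. This holds in \ttup{no}-instances as well. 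Scaling the demands cannot help, because nothing ever forces imbalance across a resistance-$1$ edge.

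The missing idea is to penalize ``crossing inside $G$'' through acyclicity at a hub with small supply, rather than by an edge that carries the imbalance. In the paper, the only resistance-$1$ edges are those incident to a hub $a$ with $d(a)=-\rho$. The construction has these parts:
\begin{itemize}
\item $\rho$ copies $G_i$ of $G$, with demand $\rho$ at every vertex;
\item for each $i$, a source $b_i$ with $d(b_i)=-\rho N/2+1$, adjacent to $x_i$ and to $a$;
\item one common source $c$ (the paper's vertex, unrelated to your scaling factor) with $d(c)=-\rho^2N/2$, adjacent to all $y_i$ and to $a$.
\end{itemize}
In a \ttup{yes}-instance, $a$ sends one unit to each $b_i$ and nothing to $c$, so the loss is $\rho$.

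In a \ttup{no}-instance, suppose $L(T')<\rho^2$. Then $a$ must send positive flow over at least two edges, and every such edge $\{a,b_i\}$ carries an integral flow in $[1,\rho)$. If $c$ were not in the component of $T'-a$ containing $b_i$, this flow would equal $1+\rho(|V_x|-N/2)$. That forces $|V_x|=N/2$ and yields a valid partition, which is impossible. Hence every neighbour of $a$ receiving positive flow lies in the component of $T'-a$ containing $c$, and two such tree edges at $a$ would close a cycle. So all $\rho$ units leave $a$ on one edge, giving loss at least $\rho^2$.

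Connecting the two sides inside a copy therefore merges components of $T'-a$ and collapses the hub's options. The $\rho$ copies provide the $\rho$ hub edges that make the ratio $\rho^2/\rho$ arbitrarily large.
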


We perform a gap-introducing reduction from the prescribed-terminal variant of \scup{BCP} on connected planar graphs with the two terminals on the boundary of the outer face (see \cref{cor:bcp-terminals}).
A gap-introducing reduction maps \ttup{yes}-instances to instances with small optimum value and \ttup{no}-instances to instances with large optimum value in polynomial time.
An approximation algorithm that bridges the gap would distinguish the two cases.
We may assume that the number of vertices in the \scup{BCP} instance is even since this only discards trivial \ttup{no}-instances.

\begin{construction}\label{cons:general-networks}
    Fix $\rho > 1$ and let $\III = (G, x, y)$ be an instance of \scup{BCP} where $G$ is connected, planar, and has prescribed terminal vertices $x, y$ on the boundary of its outer face.
    Construct an instance $\III' = (G', d, r)$ of \scup{DNR} as follows (see \cref{fig:general-network-reduction} for an illustration).

    \begin{figure}[t]
    \centering

    \begin{tikzpicture}[
            yscale=-1,
            circ/.style={fill,circle,inner sep=0pt,minimum size=2mm},
            graph/.style={draw,dashed,ellipse,thick,fill=gray!10}
        ]

        \tikzmath{
            \w = 2.7;
            \h = 1.5;
            \wgap = 3;
            \hgap = 1.1;
            \hbot = 1.2;
            \epsdist = 0.15;
        }
        \contourlength{0.3mm}

        \node[
            style=circ,
            label={north:$a$},
            label={south:\small\color{gray}$-\rho$}
        ] (a) at (2*\w,-\h) {};

        \node[
            style=circ,
            label={west:$b_1$},
            label={east:\footnotesize\color{gray}\contour*{white}{$\displaystyle -\frac{\rho N}{2} + 1$}}
        ] (b1) at (0,0) {};

        \node[
            style=circ,
            label={west:$b_2$},
            label={east:\footnotesize\color{gray} $\displaystyle -\frac{\rho N}{2} + 1$}
        ] (b2) at (\w,0) {};

        \node at (2*\w,0) {$\cdots$};

        \node[
            style=circ,
            label={west:$b_\rho$},
            label={east:\footnotesize\color{gray} $\displaystyle -\frac{\rho N}{2} + 1$}
        ] (br) at (3*\w,0) {};

        \node[
            style=circ,
            label={west:\contour*{white}{$c\;$}},
            label={east:\footnotesize\color{gray}$\displaystyle -\frac{\rho^2 N}{2}$}
        ] (c) at (3*\w+\wgap,0) {};

        \node[
            style=circ,
            label={north west:$x_1$},
            label={north east:\small\color{gray} $\displaystyle +\rho$}
        ] (v11) at (0,\h) {};

        \node[
            style=circ,
            label={north west:$x_2$},
            label={north east:\small\color{gray} $\displaystyle +\rho$}
        ] (v12) at (\w,\h) {};

        \node[
            style=circ,
            label={north west:$x_\rho$},
            label={north east:\small\color{gray} $\displaystyle +\rho$}
        ] (v1r) at (3*\w,\h) {};

        \begin{scope}[on background layer]
            \node[
                style=graph,
                minimum width=1,
                minimum height=2*\hgap cm
            ] at (0,\h+\hgap) {$G_1$};

            \node[
                style=graph,
                minimum width=1,
                minimum height=2*\hgap cm
            ] at (\w,\h+\hgap) {$G_2$};

            \node at (2*\w,\h+\hgap) {$\cdots$};

            \node[
                style=graph,
                minimum width=1,
                minimum height=2*\hgap cm
            ] at (3*\w,\h+\hgap) {$G_\rho$};
        \end{scope}

        \node[
            style=circ,
            label={south west:$y_1$},
            label={south east:\small\color{gray} $\displaystyle +\rho$}
        ] (v21) at (0,\h+2*\hgap) {};

        \node[
            style=circ,
            label={south west:$y_2$},
            label={south east:\small\color{gray} $\displaystyle +\rho$}
        ] (v22) at (\w,\h+2*\hgap) {};

        \node[
            style=circ,
            label={south west:$y_\rho$},
            label={south east:\small\color{gray} $\displaystyle +\rho$}
        ] (v2r) at (3*\w,\h+2*\hgap) {};

        \coordinate (l1) at (0,\h+2*\hgap+\hbot);
        \coordinate (r1) at (3*\w+\wgap,\h+2*\hgap+\hbot);

        \coordinate (l2) at (\w,\h+2*\hgap+\hbot-\epsdist);
        \coordinate (r2) at (3*\w+\wgap-\epsdist,\h+2*\hgap+\hbot-\epsdist);
        \coordinate (u2) at (3*\w+\wgap-\epsdist,1);

        \coordinate (lr) at (3*\w,\h+2*\hgap+\hbot-3*\epsdist);
        \coordinate (rr) at (3*\w+\wgap-3*\epsdist,\h+2*\hgap+\hbot-3*\epsdist);
        \coordinate (ur) at (3*\w+\wgap-3*\epsdist,1);

        \begin{scope}[on background layer]
            \draw[ultra thick] (a.center) -- (b1.center);
            \draw[ultra thick] (a.center) -- (b2.center);
            \draw[ultra thick] (a.center) -- (br.center);
            \draw[ultra thick] (a.center) -- (c.center);

            \draw (b1.center) -- (v11.center);
            \draw (b2.center) -- (v12.center);
            \draw (br.center) -- (v1r.center);

            \draw[rounded corners=8pt] (v21.center) -- (l1) -- (r1) -- (c.center);
            \draw[rounded corners=8pt] (v22.center) -- (l2) -- (r2) -- (u2) -- (c.center);
            \draw[rounded corners=8pt] (v2r.center) -- (lr) -- (rr) -- (ur) -- (c.center);
        \end{scope}

    \end{tikzpicture}

    \caption{%
        Illustration of \cref{cons:general-networks}.
        Demands are shown in gray.
        Thick edges have unit resistance, while thin edges have zero resistance.
    }
    \label{fig:general-network-reduction}
\end{figure}

    Let $G'$ initially be empty, and let $N = |V(G)|$.
    If $\rho$ is not an integer, then replace it by~$\lceil \rho \rceil$.
    Create $\rho$ copies of $G$, named $G_1, \dots, G_\rho$, and add them to $G'$.
    We refer to the copy of vertex $x$ in graph $G_i$ as $x_i$ for each $i \in \{1, \dots, \rho\}$, and use analogous notation for the copies of $y$.
    Add the vertices $a, c$, and, for each $i \in \{1, \dots, \rho\}$, a vertex $b_i$.
    Add the edge sets $E_a \ceq \{ \{a,c\} \} \cup \bigcup_{i = 1}^\rho \{ \{a, b_i\} \}$, $E_B \ceq \bigcup_{i = 1}^\rho \{ \{b_i, x_i\} \}$, and $E_c \ceq \bigcup_{i = 1}^\rho \{ \{c, y_i\} \}$.

    Choose $d(a) \ceq -\rho$, $d(b_i) \ceq -\rho N/2 + 1$ for every $i \in \{1, \dots, \rho\}$, $d(c) \ceq -\rho^2 N / 2$, and $d(v) \ceq \rho$ for every $v \in \bigcup_{i=1}^{\rho} V(G_i)$.
    Finally, let $r(e) \ceq 1$ if $e \in E_a$, and $r(e) \ceq 0$ otherwise.
\end{construction}

\begin{proof}[Proof of \cref{thm:constant-factor-hardness}]
    It is straightforward to verify that $\sum_{v \in V(G')} d(v) = 0$.
    As evident from \cref{fig:general-network-reduction}, $G'$ is connected and planar.
    Moreover, for fixed $\rho$, all demands are polynomially bounded in the instance size.
    We establish the following gap: if $\III$ is a \ttup{yes}-instance, then $\OPT(\III') \leq \rho$, whereas if $\III$ is a \ttup{no}-instance, then $\OPT(\III') \geq \rho^2$.

    We first consider the case that $\III$ is a \ttup{yes}-instance.
    Then, there exists a solution $V_x,V_y$ to $\III$.
    We may assume that $x \in V_x$ and $y \in V_y$.
    Let $T_x$ be a spanning tree of $G[V_x]$ and let $T_y$ be a spanning tree of $G[V_y]$.
    For each $i \in \{1, \dots, \rho\}$, let $T_x^i$, $T_y^i$ denote the copies of~$T_x$ and $T_y$ in $G_i$.
    Now, let $E' \ceq E_a \cup E_B \cup E_c \cup \bigcup_{i = 1}^\rho ( E(T_x^i) \cup E(T_y^i) )$ and let $T'$ be the subgraph of $G'$ induced by $E'$.
    Note that $T'$ is a spanning tree of $G'$ and hence a solution to $\III'$.
    We have
    $
        f_{T'}(a,b_i) = d(b_i) + \sum_{v \in V(T_x^i)} d(v) = -\frac{\rho N}{2} + 1 + \frac{\rho N}{2} = 1
    $
    for each $i \in \{1, \dots, \rho\}$, and
    $
        f_{T'}(a,c) = d(c) + \sum_{i = 1}^\rho \sum_{v \in V(T_y^i)} d(v) = -\frac{\rho^2 N}{2} + \frac{\rho^2 N}{2} = 0
    $.
    Since only edges in $E_a$ have nonzero resistance, we conclude that $L(T') = \rho$.
    Therefore, $\OPT(\III') \leq \rho$.

    We now turn to the case that $\III$ is a \ttup{no}-instance.
    The source $a$ emits $\rho$ units of flow.
    If only one edge carries flow away from $a$, then this edge alone produces a loss of at least $\rho^2$.
    Hence, from now on, assume that at least two edges carry flow away from~$a$.
    Let $T'$ be a spanning tree of $G'$ and assume for contradiction that $L(T') < \rho^2$.
    Let $\{a,b_i\} \in E(T')$ be one of the edges carrying flow away from $a$.
    The flow on this edge must satisfy $1 \leq f_{T'}(a, b_i) < \rho$.

    Let $T_i'$ be the component of $T' - \{ a \}$ that contains $b_i$.
    We claim that $c \in V(T_i')$.
    Suppose not.
    Since both $a$ and $c$ are missing from $T_i'$, we have $V(T_i') = \{b_i\} \cup V_x$ for some $V_x \subseteq V(G_i)$.
    We find that
    \begin{displaymath}
        f_{T'}(a,b_i)
        = d(b_i) + \sum_{v \in V_x} d(v)
        = -\frac{\rho N}{2} + 1 + \rho |V_x|
        = 1 + \rho \left( |V_x| - \frac{N}{2} \right).
    \end{displaymath}
    The constraint $1 \leq f_{T'}(a, b_i) < \rho$ forces $|V_x| = N/2$.
    Let $V_y \ceq V(G_i) \setminus V_x$.
    Clearly, $x_i \in V_x$.
    Moreover, $y_i \in V_y$ holds because $V_y$ consists of sinks and $\{y_i,c\}$ is the only edge available to connect them to a source for flow conservation.
    Overall, note that $V_x, V_y$ directly corresponds to a solution to $\III$, a contradiction.
    This proves the claim that $c \in V(T_i')$.

    Now, consider a second edge carrying flow away from $a$.
    If it is $\{a,c\}$, then $\{a,b_i\}$, the $(b_i,c)$-path in $T_i'$, and $\{a,c\}$ form a cycle in $T'$.
    If it is $\{a,b_j\}$ for some $j \neq i$, then the claim places both $b_i$ and $b_j$ in the component of $T' - \{a\}$ containing $c$.
    Hence, $\{a,b_i\}$, the $(b_i,b_j)$-path, and $\{a,b_j\}$ form a cycle in $T'$.
    Either way, $T'$ is not a tree, a contradiction.
    We conclude that $L(T') \geq \rho^2$, and hence $\OPT(\III') \geq \rho^2$.
\end{proof}

The parameter $\rho$ does not need to be a constant: the construction can still be performed in polynomial time if $\rho$ is polynomially bounded in the size of the \scup{BCP} instance.
Using this observation, we obtain the following stronger result, where $n$ denotes the number of vertices of the \scup{DNR} instance.

\begin{corollary}\label{cor:general-hardness}
    For every $\eps > 0$, there is no $n^{1-\eps}$-approximation algorithm for \scup{DNR} on planar graphs unless $\mathrm{P} = \mathrm{NP}$.
    This holds even if each resistance is either $0$ or $1$ and all demands are polynomially bounded in the instance size.
\end{corollary}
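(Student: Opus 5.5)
We reuse \cref{cons:general-networks} with a polynomially large $\rho$ rather than a constant one. The gap argument in the proof of \cref{thm:constant-factor-hardness} never used that $\rho$ is constant: \ttup{yes}-instances give $\OPT(\III') \leq \rho$ and \ttup{no}-instances give $\OPT(\III') \geq \rho^2$, so the multiplicative gap is $\rho$. All demands ($\rho$, $\rho N/2 - 1$, $\rho^2 N/2$) stay polynomial when $\rho$ is polynomial in $N$, and resistances remain in $\{0,1\}$. We only need $\rho$ to be an integer and $N$ even, which is already assumed.

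Next we relate $\rho$ to the size $n$ of the \scup{DNR} instance. The construction has $n = \rho N + \rho + 2$ vertices, so $n = \Theta(\rho N)$. Fix $\eps > 0$ and choose $\rho \ceq N^{c}$ for a constant $c$ depending on $\eps$. Then $n \leq 2\rho N = 2N^{c+1}$. An $n^{1-\eps}$-approximation has ratio at most $(2N^{c+1})^{1-\eps}$, and we need this to be strictly below the gap $\rho = N^c$. This holds for large $N$ whenever $(c+1)(1-\eps) < c$, that is, whenever $c > (1-\eps)/\eps$. So we take, say, $c \ceq \lceil 1/\eps \rceil$.

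Suppose an $n^{1-\eps}$-approximation algorithm exists. On a \ttup{yes}-instance it returns a tree of loss at most $n^{1-\eps}\rho < \rho^2$. On a \ttup{no}-instance every tree has loss at least $\rho^2$. Comparing the returned loss with $\rho^2$ therefore decides the prescribed-terminal \scup{BCP} problem of \cref{cor:bcp-terminals} in polynomial time, since the construction has size polynomial in $N$ for fixed $\eps$. Instances with $N$ below the threshold where the inequality kicks in are constant-size and are solved by brute force. This gives $\mathrm{P}=\mathrm{NP}$.

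The main obstacle is checking that the no-instance argument really holds for non-constant $\rho$. That argument used only the bounds $1 \leq f_{T'}(a,b_i) < \rho$ and integrality of $|V_x|$, so it goes through unchanged; the proof then reduces to the exponent bookkeeping above.
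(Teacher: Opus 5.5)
Your proposal is correct and essentially matches the paper's proof: the paper also runs \cref{cons:general-networks} with a polynomial $\rho = N^{\delta-1}$ for a constant $\delta > 1/\eps$ (your $c = \lceil 1/\eps \rceil$ is the case $\delta = c+1$), bounds $n < 2\rho N$, and checks $\rho > n^{1-\eps}$ for large $N$, so the $\rho$ versus $\rho^2$ gap from \cref{thm:constant-factor-hardness} still decides the instance. Choosing an integer exponent so that $\rho$ is itself an integer is a small convenience over the paper's rounding to $\lceil \rho \rceil$.
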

\begin{proof}
    It suffices to consider $0 < \eps \leq 1$.
    Choose a constant $\delta > 1/\eps$.
    Note that $\delta > 1$ and $\delta \eps > 1$.
    Recall that $N$ denotes the number of vertices in the \scup{BCP} instance.
    Let $\rho \ceq N^{\delta - 1}$.
    We find that
    \begin{displaymath}
        n
        = \lceil \rho \rceil (N + 1) + 2
        < 2 \rho N
        < (2 N)^\delta,
    \end{displaymath}
    where the first inequality holds for sufficiently large $N$.
    Moreover, we have
    \begin{displaymath}
        \rho
        = N^{\delta - 1}
        > (2 N)^{\delta - \delta\eps}
        = \left( (2 N)^\delta \right)^{1 - \eps}
        > n^{1 - \eps},
    \end{displaymath}
    where the first inequality again holds for sufficiently large $N$.
    Hence, by the same arguments as in the proof of \cref{thm:constant-factor-hardness}, we obtain an asymptotic gap of $\rho > n^{1 - \eps}$.
\end{proof}

The constructed instance has $\lceil \rho \rceil + 2$ sources.
Thus, \cref{thm:constant-factor-hardness} gives constant-factor inapproximability for a constant number of sources, where the number of sources depends on the factor.
\cref{cor:general-hardness} yields superconstant inapproximability, but the required $\rho$, and hence the number of sources, grow with the instance size.
Next, we show that superconstant inapproximability already holds for two sources.

\subsubsection{Hardness of Two-Source Distribution Network Reconfiguration}\label{sub:two-hardness}

We prove the following theorem.

\begin{theorem}\label{thm:2dnr-hardness}
    There is a constant $c > 0$ such that \scup{2-DNR} cannot be approximated within a factor of $c \log^2 n$ unless $\mathrm{P} = \mathrm{NP}$.
    This holds even if each resistance is either $0$ or $1$ and all demands are polynomially bounded in the instance size.
\end{theorem}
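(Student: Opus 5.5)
The plan is a gap-introducing reduction in the spirit of the proof of \cref{thm:constant-factor-hardness}, now with only two sources. A $\log^2 n$ gap is natural if the sources are forced to push flow through a path, or a binary hierarchy, of $\Theta(\log n)$ unit-resistance edges. On such a path an imbalance of roughly $\log n$ units, or one mismatch at each of $\log n$ levels, costs $\Theta(\log^2 n)$. A perfect solution instead costs $O(1)$.

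I would reduce from the prescribed-terminal variant of \scup{BCP} on connected graphs with $k=2$ (\cref{cor:bcp-terminals}; planarity is not needed here). Amplification comes from a recursive, self-similar construction. Take $m \approx \log n$ copies $G_1,\dots,G_m$ of the \scup{BCP} graph and scale their demands by weights chosen so that each copy's required balanced split carries a distinct amount. For example, copy $G_j$ has sink demand $2^{j}$ per vertex, with polynomial demands kept since $m = O(\log N)$. The copies are chained by zero-resistance edges to a spine of $m$ unit-resistance edges $e_1,\dots,e_m$ between the two sources $s_1$ and $s_2$. Demands on $s_1$, $s_2$ and on auxiliary spine vertices are set as in \cref{cons:general-networks}, so that:
\begin{itemize}
\item in a \ttup{yes}-instance, a spanning tree cutting every copy perfectly in half (via copies of $T_x, T_y$ as in the earlier proof) sends flow $0$ or $1$ over each spine edge, giving loss $O(m)$ or even $O(1)$;
\item any copy that is not perfectly balanced shifts the carried flow by a full unit of its scale.
\end{itemize}

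In the \ttup{no}-instance I would argue that every spanning tree has, for each copy $G_j$, either a component containing both a $b$-type vertex and the $c$-type vertex, or an unbalanced split. The first case is ruled out by acyclicity, as in the proof of \cref{thm:constant-factor-hardness}. So the flow across spine edge $e_j$ differs from its ideal value by at least the accumulated imbalance of the copies preceding it. With suitably chosen scales this error is $\Omega(j)$ on edge $e_j$, and summing squares gives $\sum_{j\le m} \Omega(j^2) = \Omega(m^3)$. Alternatively, with two-sided choices, only $\Omega(m^2)$ total against an $O(1)$ optimum. Either way the ratio is $\Omega(\log^2 n)$ once $m = \Theta(\log n)$ while $n$ stays polynomial in $N$. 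This fixes the relationship: demands up to $2^m$ must remain polynomial, forcing $m = \Theta(\log N)$, which is exactly why the bound is polylogarithmic rather than polynomial.

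The main obstacle is the \ttup{no}-case accounting. With only two sources, an imbalance in one copy can be compensated by an opposite imbalance in another, so errors might cancel along the spine. I would first try geometric scales with ratio larger than the maximum possible compensating error, for instance $\Theta(N)$ per level, but that makes demands superpolynomial after $\log n$ levels. Failing that, I would use scales $2^j$ together with a parity or residue argument: the flow on $e_j$ modulo $2^{j}$ is determined by whether copies $1,\dots,j$ are balanced. A single unbalanced copy then forces every later spine edge to carry an error. To reach $\Omega(\log^2 n)$ overall, I would place the gadget in a binary-tree arrangement so that each of $\log n$ levels is forced to carry an error of order its depth.
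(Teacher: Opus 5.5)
There is a genuine gap. You never specify an instance with only two sources whose \ttup{no}-case you can analyze, and the mechanism you borrow from \cref{thm:constant-factor-hardness} does not survive the restriction to two sources.

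In \cref{cons:general-networks}, each copy $G_i$ has its own source $b_i$ of demand $-\rho N/2+1$. This source supplies all but one unit of the demand of half the copy, which is what turns a balanced cut into a flow of $1$ on the unit edge $\{a,b_i\}$. The acyclicity argument there also needs the hub $a$ with unit edges to all $b_i$ and to $c$. Your no-case dichotomy refers to $b$-type and $c$-type vertices, and you set the auxiliary spine demands ``as in \cref{cons:general-networks}''. That makes these vertices sources, so with $m$ copies you no longer have a \scup{2-DNR} instance.

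Suppose instead that every vertex other than $s_1,s_2$ has nonnegative demand. Then in any spanning tree $T$, an edge off the $s_1$--$s_2$ path of $T$ carries the entire demand of its source-free side. Along that path, the flow is monotone, because the $s_1$-side only accumulates sink demand. Hence, if the \ttup{yes}-case loss is to be $O(1)$, all spine edges on the tree path carry flows within $O(1)$ of each other. There is no place where $m$ copies are tested for balance independently, so the claimed ``error $\Omega(j)$ on $e_j$'' has no mechanism behind it. Nor does anything force a tree to use the spine at all: unless every $s_1$--$s_2$ path contains a unit edge, the tree can route everything over zero-resistance edges.

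The \ttup{no}-case accounting, which you yourself flag as the main obstacle, is left as a list of untested alternatives. The two sides of the gap are also hedged: $O(m)$ or $O(1)$ against $\Omega(m^2)$ or $\Omega(m^3)$. Note that $O(m)$ against $\Omega(m^2)$ gives only $\Omega(\log n)$. Finally, your explanation that $\log^2 n$ arises because demands $2^m$ must stay polynomial is fitted to the target rather than derived.

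The missing idea is that, with two sources, the loss naturally measures the \emph{square of a count}. The count is the number of sinks next to the small source that cannot be leaves there, because they must carry the big source's flow. A superconstant gap should therefore be imported from a problem that already has one, and squaring produces the $\log^2 n$.

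The paper reduces from \scup{Set Cover}, which is hard to approximate within $c\log(\nu+\mu)$. Source $a$ (demand $-\mu^2$) has unit edges to $b$ and to $\mu$ copies of set vertices $s_i^\ell$ (demand $1$). Source $b$ (demand $-\nu\mu^3$) has zero-resistance edges to all $s_i^\ell$, which are joined by zero-resistance edges to element vertices $u_j^\ell$ of demand $\mu^2$.

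An internal set vertex has a tree neighbor other than $a$, so it lies in $b$'s component of $x'-a$. The only exception is when some element vertex lies outside that component, which already costs $\mu^4$. Hence $L(x')\ge|V_{\mathcal S}\cap\Int(x')|^2$. A cover of size $k$ gives loss at most $2\mu^2k^2$. The backward map takes the copy with the fewest internal set vertices, so any $\rho$-approximate $x'$ yields a cover of size at most $\sqrt{2\rho}\,\OPT(\III)$. With $\rho=c\log^2 n$ and $n=\mu^2+\nu\mu+2$, this contradicts the \scup{Set Cover} bound for small $c$.

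Your \scup{BCP}-based plan has no counterpart to this square-root transfer. Since \scup{BCP} carries no approximation gap of its own, the entire $\log^2 n$ would have to be manufactured by an amplification you have not constructed.
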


We perform an approximation-preserving reduction from \scup{Set Cover}.
We remark that this is not an L-reduction: the hardness transfer is nonlinear and hence requires a more tailored approach.

\probDefOptimization{Set Cover}
{A ground set $U= \{ u_1, \dots, u_\nu \}$ and a collection of subsets $\mathcal{S} = \{S_1, \dots, S_\mu\}$ with $S_i \subseteq U$ for every $i \in \{1, \dots, \mu\}$.}
{A subcollection $\mathcal{S}' \subseteq \mathcal{S}$ that covers every element of the ground set, that is, $\bigcup_{S \in \mathcal{S}'} S = U$.}
{Minimize $|\mathcal{S}'|$.}

There is a constant $c > 0$ such that \scup{Set Cover} cannot be approximated within a factor of $c \log (\nu + \mu)$ unless $\mathrm{P} = \mathrm{NP}$ \cite[Chapter 8.8]{demaine2026computational}.
We restrict our attention to nontrivial instances of \scup{Set Cover}.
These are instances with $U \neq \emptyset$ that have a solution, that is, each element of $U$ appears in at least one subset of $\mathcal{S}$.

We first present the forward mapping that transforms any instance $\III$ of \scup{Set Cover} into an instance $\III'$ of \scup{2-DNR}.
In the following, we construct a graph containing $\mu$ copies, indexed by $\ell$, of the \scup{Set Cover}-instance. 
The two sources $a$ and $b$ do not belong to any particular copy.

\begin{construction}[Forward Mapping]\label{cons:2dnr-forward}
    Let $\III = (U, \mathcal{S})$ be an instance of \scup{Set Cover} with $U = \{u_1, \dots, u_\nu\}$ and $\mathcal{S} = \{S_1, \dots, S_\mu\}$.
    We construct an instance $\III' = (G=(V,E), d, r)$ of \scup{2-DNR} as follows (see \cref{fig:two-source-reduction} for an illustration).

    \begin{figure}[t]
    \centering

    \begin{tikzpicture}[
            yscale=-1,
            circ/.style={fill,circle,inner sep=0pt,minimum size=2mm}
        ]

        \tikzmath{
            \w = 1.4;
            \sgap = 1;
            \wgap = 1.3;
            \wtop = 2.2;
            \epsdist = 0.1;
            \h = 2;
            \htop = 2.5;
            \pad = 0.15;
            \parts = 7.4;
            \stub = 0.6;
        }
        \contourlength{0.3mm}

        \begin{scope}[local bounding box=lbox]

            \node[
                style=circ,
                label={west:\contour*{gray!10}{\shortstack{$s_1^1$\\\small\color{gray}$+1$}}}
            ] (s11) at (0,0) {};

            \node[
                style=circ,
                label={west:\contour*{gray!10}{\shortstack{$s_2^1$\\\small\color{gray}$+1$}}}
            ] (s21) at (\w,0) {};

            \node at (\w+\sgap,0) {$\cdots$};

            \node[
                style=circ,
                label={west:\contour*{gray!10}{\shortstack{$s_\mu^1$\\\small\color{gray}$+1$}}}
            ] (sm1) at (\w+\sgap+\wgap,0) {};

            \node[
                style=circ,
                label={west:\contour*{gray!10}{\shortstack{$u_1^1$\\\small\color{gray}$+\mu^2$}}}
            ] (u11) at (0-3*\epsdist,\h) {};

            \node[
                style=circ,
                label={west:\contour*{gray!10}{\shortstack{$u_2^1$\\\small\color{gray}$+\mu^2$}}}
            ] (u21) at (\w-\epsdist,\h) {};

            \node at (\w+\sgap+\epsdist,\h) {$\cdots$};

            \node[
                style=circ,
                label={west:\contour*{gray!10}{\shortstack{$u_\nu^1$\\\small\color{gray}$+\mu^2$}}}
            ] (un1) at (\w+\sgap+\wgap+3*\epsdist,\h) {};

        \end{scope}

        \begin{scope}[xshift=\parts cm,local bounding box=rbox]

            \node[
                style=circ,
                label={east:\contour*{gray!10}{\shortstack{$s_1^\mu$\\\small\color{gray}$+1$}}}
            ] (s1m) at (0,0) {};

            \node[
                style=circ,
                label={east:\contour*{gray!10}{\shortstack{$s_2^\mu$\\\small\color{gray}$+1$}}}
            ] (s2m) at (\w,0) {};

            \node at (\w+\wgap,0) {$\cdots$};

            \node[
                style=circ,
                label={east:\contour*{gray!10}{\shortstack{$s_\mu^\mu$\\\small\color{gray}$+1$}}}
            ] (smm) at (\w+\sgap+\wgap,0) {};

            \node[
                style=circ,
                label={east:\contour*{gray!10}{\shortstack{$u_1^\mu$\\\small\color{gray}$+\mu^2$}}}
            ] (u1m) at (0-3*\epsdist,\h) {};

            \node[
                style=circ,
                label={east:\contour*{gray!10}{\shortstack{$u_2^\mu$\\\small\color{gray}$+\mu^2$}}}
            ] (u2m) at (\w-\epsdist,\h) {};

            \node at (\w+\wgap+\epsdist,\h) {$\cdots$};

            \node[
                style=circ,
                label={east:\contour*{gray!10}{\shortstack{$u_\nu^\mu$\\\small\color{gray}$+\mu^2$}}}
            ] (unm) at (\w+\sgap+\wgap+3*\epsdist,\h) {};

        \end{scope}

        \begin{scope}[on background layer]
            \draw[dashed,rounded corners=5pt,fill=gray!10] ([xshift=-\pad cm,yshift=-\pad cm]lbox.north west) rectangle ([xshift=3+\pad cm,yshift=\pad cm]lbox.south east);

        \draw[dashed,rounded corners=5pt,fill=gray!10] ([xshift=-3-\pad cm,yshift=-\pad cm]rbox.north west) rectangle ([xshift=\pad cm,yshift=\pad cm]rbox.south east);
        \end{scope}

        \coordinate (mid) at ($(sm1)!0.5!(s1m)$);

        \node[
            style=circ,
            label={north:$a$},
            label={west:\small\color{gray}$-\mu^2$}
        ] (a) at ([xshift=-\wtop cm,yshift=-\htop cm]mid) {};

        \node[
            style=circ,
            label={north:$b$},
            label={east:\small\color{gray}$-\nu\mu^3$}
        ] (b) at ([xshift=\wtop cm,yshift=-\htop cm]mid) {};

        \node at ([yshift=0.5*\h cm]mid) {\Large $\cdots$};

        \begin{scope}[on background layer]
            \draw[ultra thick] (a.center) -- (s11.center);
            \draw[ultra thick] (a.center) -- (s21.center);
            \draw[ultra thick] (a.center) -- (sm1.center);

            \draw[ultra thick] (a.center) -- (s1m.center);
            \draw[ultra thick] (a.center) -- (s2m.center);
            \draw[ultra thick] (a.center) -- (smm.center);

            \draw[ultra thick] (a.center) -- (b.center);

            \draw (b.center) -- (s11.center);
            \draw (b.center) -- (s21.center);
            \draw (b.center) -- (sm1.center);

            \draw (b.center) -- (s1m.center);
            \draw (b.center) -- (s2m.center);
            \draw (b.center) -- (smm.center);

            \begin{scope}[overlay]
                \clip (-\w,-\stub) rectangle (4*\w,\stub);

                \draw (s11.center) -- ++(75:2);
                \draw (s11.center) -- ++(40:2);

                \draw (s21.center) -- ++(120:2);
                \draw (s21.center) -- ++(80:2);
                \draw (s21.center) -- ++(55:2);

                \draw (sm1.center) -- ++(150:2);
                \draw (sm1.center) -- ++(125:2);
                \draw (sm1.center) -- ++(100:2);
            \end{scope}

            \begin{scope}[overlay]
                \clip (-\w,-\stub+\h) rectangle (4*\w,\stub+\h);

                \draw (u11.center) -- ++(-35:2);
                \draw (u11.center) -- ++(-50:2);

                \draw (u21.center) -- ++(-45:2);
                \draw (u21.center) -- ++(-65:2);
                \draw (u21.center) -- ++(-80:2);
                \draw (u21.center) -- ++(-115:2);

                \draw (un1.center) -- ++(-145:2);
                \draw (un1.center) -- ++(-125:2);
            \end{scope}

            \begin{scope}[xshift=\parts cm,overlay]
                \clip (-\w,-\stub) rectangle (4*\w,\stub);

                \draw (s1m.center) -- ++(75:2);
                \draw (s1m.center) -- ++(40:2);

                \draw (s2m.center) -- ++(120:2);
                \draw (s2m.center) -- ++(80:2);
                \draw (s2m.center) -- ++(55:2);

                \draw (smm.center) -- ++(150:2);
                \draw (smm.center) -- ++(125:2);
                \draw (smm.center) -- ++(100:2);
            \end{scope}

            \begin{scope}[xshift=\parts cm,overlay]
                \clip (-\w,-\stub+\h) rectangle (4*\w,\stub+\h);

                \draw (u1m.center) -- ++(-35:2);
                \draw (u1m.center) -- ++(-50:2);

                \draw (u2m.center) -- ++(-45:2);
                \draw (u2m.center) -- ++(-65:2);
                \draw (u2m.center) -- ++(-80:2);
                \draw (u2m.center) -- ++(-115:2);

                \draw (unm.center) -- ++(-145:2);
                \draw (unm.center) -- ++(-125:2);
            \end{scope}
        \end{scope}

    \end{tikzpicture}

    \caption{%
        Illustration of \cref{cons:2dnr-forward}.
        Demands are shown in gray.
        Thick edges have unit resistance, while thin edges have zero resistance.
        Each framed part represents one copy.
    }
    \label{fig:two-source-reduction}
\end{figure}

    Let $V_\mathcal{S}^\ell \ceq \bigcup_{i=1}^\mu \{s_i^\ell\}$ for each $\ell \in \{1, \dots, \mu\}$, and let $V_\mathcal{S} \ceq \bigcup_{\ell=1}^\mu V_\mathcal{S}^\ell$.
    Similarly, let $V_U^\ell \ceq \bigcup_{j=1}^\nu \{u_j^\ell\}$ for each $\ell \in \{1, \dots, \mu\}$, and let $V_U \ceq \bigcup_{\ell=1}^\mu V_U^\ell$.
    Let $V \ceq \{a,b\} \cup V_\mathcal{S} \cup V_U$ be the full vertex set.
    Let $E_a \ceq \{ \{a, v\} \mid v \in V_\mathcal{S} \}$ and $E_b \ceq \{ \{b, v\} \mid v \in V_\mathcal{S} \}$.
    The edge set $E_\SU \ceq \bigcup_{\ell=1}^\mu \{ \{ s_i^\ell, u_j^\ell \} \mid u_j \in S_i \}$ encodes membership of the elements of the ground set~$U$ within the subsets in the collection $\mathcal{S}$.
    Let $E \ceq \{\{a,b\}\} \cup E_a \cup E_b \cup E_\SU$ be the full edge set.

    Let $a$ and $b$ be sources with $d(a) \ceq - \mu^2$ and $d(b) \ceq - \nu \mu^3$.
    Let all other vertices be sinks with $d(v) \ceq 1$ for every $v \in V_\mathcal{S}$, and $d(v) \ceq \mu^2$ for every $v \in V_U$.
    Finally, let $r(e) \ceq 1$ if $e \in E_a \cup \{ \{a,b\} \}$, and $r(e) \ceq 0$ otherwise.
\end{construction}

We have $|V_\mathcal{S}| = \mu^2$, $|V_U| = \nu \mu$, and $|V| = \mu^2 + \nu \mu + 2$.
Note that $G$ is connected and that $\sum_{v \in V} d(v) = 0$.
Moreover, all demands are polynomially bounded in the instance size.

The idea behind the construction is the following.
Only the edges incident to source $a$ have nonzero resistance.
To minimize loss, we would like to distribute the flow originating at $a$ to its neighbors, the vertex set $V_\mathcal{S} \cup \{b\}$, as evenly as possible.
However, source $b$ emits a much larger quantity of flow, most of which must be routed to the high-demand sinks in~$V_U$.
To reach the vertices in $V_U$, the flow originating at $b$ must pass through some vertices in~$V_\mathcal{S}$.
Because of the required acyclicity, these pass-through vertices in $V_\mathcal{S}$ cannot, with the exception of at most one, be adjacent to $a$ in a spanning tree solution.
Therefore, minimizing loss is directly related to minimizing the number of pass-through vertices, which need to ``cover'' the vertices in~$V_U$.
This establishes, on a high level, the correspondence to the \scup{Set Cover} problem.

We remark that the purpose of having multiple copies of the same structure is to amplify the loss penalty incurred by pass-through vertices.
This sharpens the loss gap between solutions of different quality, especially when the number of pass-through vertices is small.

Next, we present the backward mapping.
Given a tree $T$, let $\Int(T) \subseteq V(T)$ denote the internal (non-leaf) vertices of $T$.
Recall that every solution $x'$ to $\III'$ is a tree.
Intuitively, for a specific solution $x'$, the set $ V_\mathcal{S} \cap \Int(x')$ contains the pass-through vertices.
The reason is that, in low-cost solutions, vertices in $V_\mathcal{S}$ that have no pass-through role are leaves attached to source $a$.

\begin{construction}[Backward Mapping]\label{map:2dnr-backward}
    Let $x'$ be a solution to $\III'$.
    We construct a solution $x$ to $\III$.
    Choose an index $\ell \in \{1, \dots, \mu\}$ such that the size of $ V_\mathcal{S}^\ell \cap \Int(x')$ is minimal.
    Then, put the elements of $\mathcal{S}$ corresponding to the vertices in $ V_\mathcal{S}^\ell \cap \Int(x')$ into the solution $x$, that is, let $x \ceq \{S_i \subseteq \mathcal{S} \mid s_i^\ell \in  V_\mathcal{S}^\ell \cap \Int(x')\}$.
\end{construction}

We now show that $x$ is a solution to $\III$.
Assume for contradiction that $x$ is not a solution to $\III$.
Then, some element $u_j \in U$ is not covered by $x$, that is, $u_j \notin S_i$ for every $S_i \in x$.
Recall that $E_\SU = \bigcup_{\ell=1}^\mu \{ \{ s_i^\ell, u_j^\ell \} \mid u_j \in S_i \}$.
Let $\ell \in \{1,\dots,\mu\}$ be the index chosen by the backward mapping.
For each vertex $s_i^\ell \in V_\mathcal{S}^\ell$ with $\{s_i^\ell,u_j^\ell\} \in E_\SU$, we have $u_j \in S_i$ by the definition of $E_\SU$, which implies $S_i \notin x$ by the definition of $u_j$, and finally implies $s_i^\ell \notin \Int(x')$ by the definition of $x$.
Every path in the tree $x'$ starting at $u_j^\ell$ begins with some edge $\{u_j^\ell, s_i^\ell\} \in E_\SU$, and must end at $s_i^\ell$ since $s_i^\ell \notin \Int(x')$.
In particular, there is no path from $u_j^\ell$ to any of the sources $a$ and $b$ in $x'$, implying that $x'$ is not a spanning tree of~$G$, a contradiction.

The following lemma shows that pass-through vertices contribute to loss.
Specifically, the square of the number of pass-through vertices is a lower bound on the loss.

\begin{lemma}\label{lem:2dnr-first}
    Let $x'$ be a solution to $\III'$.
    Then, $L(x') \geq |  V_\mathcal{S} \cap \Int(x') |^2$.
\end{lemma}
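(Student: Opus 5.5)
Let $k \ceq |V_\mathcal{S} \cap \Int(x')|$ and write $T \ceq x'$. The plan is to show that the $k$ pass-through vertices carry at least $\mu^2 \cdot \Omega(1)$ of flow through unit-resistance edges concentrated enough to force loss at least $k^2$. Only edges in $E_a \cup \{\{a,b\}\}$ have nonzero resistance, so $L(T) = \sum_{e \ni a} f_T(e)^2$, where the sum ranges over tree edges at $a$. I would root $T$ at $a$ and look at the components of $T - a$. Each component $C$ is attached to $a$ by exactly one edge $e_C$. That edge carries $|\sum_{v \in C} d(v)|$, which equals the total sink demand of $C$ minus the source supply inside $C$. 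Exactly one component, call it $C_b$, contains $b$; every other component contains only sinks, so its edge carries its full positive demand.

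\textbf{Step 1: localize the internal vertices.} Take $s \in V_\mathcal{S} \cap \Int(T)$ lying in a component $C \neq C_b$. Since $s$ is internal, it has a neighbor in $T$ other than its parent. Its neighbors in $G$ are $a$, $b$ and vertices of $V_U$. The neighbor $b$ is excluded because $b \notin C$, and the parent is either $a$ or some $u \in V_U$. So $s$ must be adjacent in $T$ to some $u_j^\ell \in V_U$ that lies in $C$. Hence $C$ contains demand at least $\mu^2$ and $f_T(e_C) \geq \mu^2$. Internal vertices in $C_b$ are handled in Step 2.

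\textbf{Step 2: case analysis on the $b$-component.} Every vertex of $V_U$ is a sink of demand $\mu^2$, and there are $\nu\mu$ of them, so total $V_U$-demand is $\nu\mu^3 = -d(b)$. If some $u \in V_U$ lies outside $C_b$, Step 1's bound already applies to its component, giving loss at least $\mu^4 \geq k^2$ since $k \leq |V_\mathcal{S}| = \mu^2$. So I may assume all of $V_U$ lies in $C_b$. Then by Step 1 every internal $s$ lies in $C_b$, because any other component containing a $u$ would contradict this. It follows that $C_b$ contains $V_U$, $b$, and all $k$ internal $S$-vertices, together with possibly some leaves. The net flow on $e_{C_b}$ is therefore at least $k + \nu\mu^3 - \nu\mu^3 = k$, since every $V_\mathcal{S}$-vertex in $C_b$ contributes $+1$. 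This gives $L \geq f(e_{C_b})^2 \geq k^2$. Non-internal $S$-vertices only add further demand, so the bound is not weakened. If $C_b$ is attached via $\{a,b\}$ or via some $\{a,s\}$ edge, the argument is the same, since only the component's demand sum matters.

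\textbf{Main obstacle.} The delicate point is the first case of Step 2, namely mixing the cases cleanly. It requires checking that $\mu^4 \geq k^2$ and that this suffices, together with the boundary case $V_U$ split across components. In the split case I would simply use that some non-$b$ component contains a $V_U$ vertex and bound its edge flow by $\mu^2$, giving loss $\geq \mu^4 \geq k^2$.
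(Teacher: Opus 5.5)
Your proof is correct and takes the same route as the paper. You use that only edges at $a$ have resistance, decompose $x'-a$ into components, and split on two cases. If all of $V_U$ lies in the $b$-component, every internal $V_\mathcal{S}$-vertex is forced there, giving net demand at least $|V_\mathcal{S}\cap\Int(x')|$; otherwise some sink-only component has demand at least $\mu^2$, giving loss at least $\mu^4=|V_\mathcal{S}|^2$. The only difference is cosmetic: you locate the internal vertices with a rooted parent/child argument, whereas the paper observes directly that each such vertex has a neighbor in $\{b\}\cup V_U$ within $x'-\{a\}$.
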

\begin{proof}
    Let $T_1, \dots, T_k$ be the components of $x' - \{a\}$, and let $\mathcal{C} \ceq \{T_1, \dots, T_k\}$.
    Since exactly the edges incident to $a$ have nonzero resistance, we have
    \begin{displaymath}
        L(x')
        = \sum_{\{a,v\} \in E(x')} \big( f_{x'}(a,v) \big)^2
        = \sum_{T \in \mathcal{C}} \left( \sum_{v \in V(T)} d(v) \right)^2.
    \end{displaymath}
    Let $T_b \in \mathcal{C}$ be the component containing $b$.
    We distinguish between two cases.

    First, consider the case that every $v \in V_U$ is contained in $T_b$.
    By construction, each vertex in $V_\mathcal{S}$ has the following neighbors in the full graph $G$: the sources $a,b$, and some vertices from $V_U$.
    Each vertex in $\Int(x')$ has a degree of at least two in $x'$, and hence a degree of at least one in $x' - \{a\}$.
    We conclude that each vertex in $V_\mathcal{S} \cap \Int(x')$ is adjacent to $b$ or some vertex of $V_U$ in $x' - \{a\}$.
    Since $b$ and every vertex in $V_U$ are contained in $T_b$, this implies that all vertices in $V_\mathcal{S} \cap \Int(x')$ are also contained in $T_b$.
    In summary, we have $\{b\} \cup \big( V_\mathcal{S} \cap \Int(x') \big) \cup V_U \subseteq V(T_b)$.
    Using this, we get
    \begin{align*}
        \sum_{v \in V(T_b)} d(v)
        &\geq d(b) + | V_\mathcal{S} \cap \Int(x')| \cdot 1 + |V_U| \cdot \mu^2 \\
        &= -\nu \mu^3 + | V_\mathcal{S} \cap \Int(x') | + \nu \mu^3 \\
        &= | V_\mathcal{S} \cap \Int(x') |,
    \end{align*}
    where the inequality holds because all vertices of $T_b$ other than $b$ have positive demand.
    We find that
    \begin{displaymath}
        L(x')
        = \sum_{T \in \mathcal{C}} \left( \sum_{v \in V(T)} d(v) \right)^2
        \geq \left( \sum_{v \in V(T_b)} d(v) \right)^2
        \geq | V_\mathcal{S} \cap \Int(x') |^2.
    \end{displaymath}

    Now, consider the case that some $v \in V_U$ is not contained in $T_b$.
    In other words, there is a component $\widetilde{T} \in \mathcal{C} \setminus \{ T_b \}$ and a vertex $\tilde{v} \in V_U$ such that $\tilde{v} \in V(\widetilde{T})$.
    We have
    \begin{displaymath}
        \sum_{v \in V(\widetilde{T})} d(v)
        = d(\tilde{v}) + \sum_{\substack{v \in V(\widetilde{T}) \\ v \notin \{a,b,\tilde{v}\}}} d(v)
        \geq d(\tilde{v})
        = \mu^2.
    \end{displaymath}
    Finally,
    \begin{displaymath}
        L(x')
        = \sum_{T \in \mathcal{C}} \left( \sum_{v \in V(T)} d(v) \right)^2
        \geq \left( \sum_{v \in V(\widetilde{T})} d(v) \right)^2
        \geq ( \mu^2 )^2
        = |V_\mathcal{S}|^2
        \geq | V_\mathcal{S} \cap \Int(x') |^2.\qedhere
    \end{displaymath}
\end{proof}

The next lemma bounds the optimal loss of the constructed instance $\III'$ in terms of the optimal size of a set cover for $\III$.

\begin{lemma}\label{lem:2dnr-second}
    The optimal values of $\III$ and $\III'$ satisfy $\OPT(\III') \leq 2\mu^2 \OPT(\III)^2$.
\end{lemma}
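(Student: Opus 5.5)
Let $k \ceq \OPT(\III)$ and fix an optimal set cover $\mathcal{S}^* = \{S_{i_1}, \dots, S_{i_k}\}$. We bound $\OPT(\III')$ by exhibiting one spanning tree $T$ of $G$ and computing its loss directly. Since only the edges of $E_a \cup \{\{a,b\}\}$ have nonzero resistance, $L(T)$ is the sum, over the components of $T - \{a\}$ attached to $a$, of the squared total demand of that component (as in the proof of \cref{lem:2dnr-first}). So we only need to control how the vertices group into these components.

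We build $T$ as follows. For every copy $\ell$ and every chosen set $S_i \in \mathcal{S}^*$, add the zero-resistance edge $\{b, s_i^\ell\}$. Assign each element $u_j$ to one chosen set $S_{i(j)} \ni u_j$, and add the edge $\{s_{i(j)}^\ell, u_j^\ell\}$ in every copy $\ell$. Every remaining vertex $s_i^\ell$ with $S_i \notin \mathcal{S}^*$ becomes a leaf attached to $a$ via $\{a, s_i^\ell\}$. Finally, add the edge $\{a,b\}$. This gives a tree: $b$ together with its attached $s$-vertices and their $u$-leaves forms a tree, the other $s$-vertices are pendant at $a$, and the edge $\{a,b\}$ joins the two parts. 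Every vertex is covered, since each $u_j^\ell$ hangs from some chosen $s^\ell$.

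Now we compute the loss. Each leaf $s_i^\ell$ at $a$ carries flow $1$, and there are $\mu(\mu - k)$ such leaves. The component of $b$ has total demand
\[
-\nu\mu^3 + \mu k \cdot 1 + \nu\mu \cdot \mu^2 = \mu k.
\]
Hence
\[
L(T) = \mu(\mu-k) + (\mu k)^2 \le \mu^2 + \mu^2 k^2 \le 2\mu^2 k^2,
\]
where the last step uses $k \ge 1$, which holds for nontrivial instances since $U \neq \emptyset$. This yields $\OPT(\III') \le L(T) \le 2\mu^2\OPT(\III)^2$.

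The only point that needs care is checking that the chosen edge set really is a spanning tree. In particular, no $s$-vertex may be joined to both $a$ and $b$, and each $u_j^\ell$ must receive exactly one parent. Once that is verified, the demand bookkeeping above is routine.
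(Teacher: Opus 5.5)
Your proposal is correct and follows essentially the same construction as the paper's proof: chosen-set copies hang off $b$, each $u_j^\ell$ is attached to one covering chosen $s^\ell$, and all other $s^\ell$ are leaves at $a$. The loss computation $\mu(\mu-k) + (\mu k)^2 \le 2\mu^2 k^2$ (using $k \ge 1$) matches the paper's bound exactly.
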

\begin{proof}
    Let $x$ be an optimal solution to $\III$.
    We prove the claim by constructing a solution $x'$ to $\III'$ such that $L(x') \le 2\mu^2 |x|^2$.

    Let $E_a' \ceq \bigcup_{\ell = 1}^\mu \{ \{ a, s_i^\ell \} \mid S_i \notin x \}$ and $E_b' \ceq \bigcup_{\ell = 1}^\mu \{ \{ b, s_i^\ell \} \mid S_i \in x \}$.
    Let $g \colon U \rightarrow x$ be an auxiliary function that maps each element $u \in U$ to a subset $S \in x$ such that $u \in S$.
    Such a function exists since $x$ is a solution to $\III$.
    Let $E_\SU' \ceq \bigcup_{\ell = 1}^\mu \{ \{ s_i^\ell, u_j^\ell \} \mid S_i = g(u_j) \}$.
    Finally, let $E' \ceq \{\{a,b\}\} \cup E_a' \cup E_b' \cup E_\SU'$ and let $x'$ be the subgraph of $G$ induced by $E'$.

    We argue that $x'$ is a spanning tree.
    First, $x'$ is connected: vertex $a$ is adjacent to vertex~$b$, each vertex in $V_\mathcal{S}$ is adjacent to $a$ or $b$, and each vertex in $V_U$ is adjacent to a vertex in~$V_\mathcal{S}$.
    Second, $x'$ has $|V| - 1$ edges: $|E'| = 1 + |E_a'| + |E_b'| + |E_\SU'| = 1 + |V_\mathcal{S}| + |V_U| = |V| - 1$.
    Hence, $x'$ is a solution to $\III'$.

    Next, we show that $L(x') \leq 2 \mu^2 |x|^2$.
    It suffices to consider the edges in $E_a'$ and the edge $\{a,b\}$ since only these edges of $x'$ have nonzero resistance.
    We begin with $E_a'$.
    Let $\{ a, s_i^\ell \} \in E_a'$.
    By definition of $E_a'$, we have $S_i \notin x$.
    Thus, applying the definitions of $E_b'$ and $E_\SU'$, we find that $s_i^\ell$ is adjacent only to $a$.
    Therefore, $f_{x'}(a,s_i^\ell) = d(s_i^\ell) = 1$ holds for every edge $\{ a, s_i^\ell \} \in E_a'$.
    Moreover, by flow conservation we have
    \begin{displaymath}
        f_{x'}(a,b)
        = -\left( d(a) + \sum_{s_i^\ell \colon \{a,s_i^\ell\} \in E_a'} d(s_i^\ell) \right)
        = \mu^2 - |E_a'|.
    \end{displaymath}
    We find that
    \begin{align*}
        L(x')
        &= \big( f_{x'}(a,b) \big)^2 + \sum_{ \{a,s_i^\ell\} \in E_a' } \big( f_{x'}(a,s_i^\ell) \big)^2 \\
        &= \big( \mu^2 - |E_a'| \big)^2 + |E_a'| \\
        &= \big( \mu^2 - \mu^2 + \mu |x| \big)^2 + \mu^2 - \mu |x| \\
        &= \mu^2 |x|^2 + \mu^2 - \mu |x| \\
        &\leq 2 \mu^2 |x|^2,
    \end{align*}
    where the third equality uses $|E_a'| = \mu (\mu - |x|) = \mu^2 - \mu |x|$ and the inequality uses that $|x| \geq 1$ for nontrivial instances of \scup{Set Cover}.

    Finally, note that $\OPT(\III') \leq L(x') \leq 2 \mu^2 |x|^2 = 2 \mu^2 \OPT(\III)^2$.
\end{proof}

We prove the key lemma next.
It states that if $x'$ is a low-loss solution, then $x$ is of small size.

\begin{lemma}\label{lem:2dnr-final}
    Let $x'$ be a solution to $\III'$.
    Let $x$ be a solution to $\III$ obtained from $x'$ by the backward mapping.
    If $L(x') \leq \rho \OPT(\III')$ for some $\rho \geq 1$, then $|x| \leq \sqrt{2 \rho} \OPT(\III)$.
\end{lemma}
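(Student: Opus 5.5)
We chain \cref{lem:2dnr-first} and \cref{lem:2dnr-second} together with the choice of the index $\ell$ in the backward mapping. The key observation is that the backward mapping picks the copy $\ell$ minimizing $|V_\mathcal{S}^\ell \cap \Int(x')|$. Since the sets $V_\mathcal{S}^1, \dots, V_\mathcal{S}^\mu$ partition $V_\mathcal{S}$, averaging over the $\mu$ copies gives
\begin{displaymath}
    |x| = |V_\mathcal{S}^\ell \cap \Int(x')| \leq \frac{1}{\mu} \sum_{\ell'=1}^{\mu} |V_\mathcal{S}^{\ell'} \cap \Int(x')| = \frac{1}{\mu} |V_\mathcal{S} \cap \Int(x')|.
\end{displaymath}
Here we use that the vertices $s_i^\ell$ for distinct $i$ correspond to distinct sets $S_i$, so $|x|$ equals the number of internal vertices in copy $\ell$.

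Next we apply \cref{lem:2dnr-first}, which gives $|V_\mathcal{S} \cap \Int(x')|^2 \leq L(x')$. Combining this with the hypothesis $L(x') \leq \rho \OPT(\III')$ and with \cref{lem:2dnr-second}, we obtain
\begin{displaymath}
    \mu^2 |x|^2 \leq |V_\mathcal{S} \cap \Int(x')|^2 \leq L(x') \leq \rho \OPT(\III') \leq 2 \rho \mu^2 \OPT(\III)^2.
\end{displaymath}
Dividing by $\mu^2$ and taking square roots, which is legitimate since all quantities are nonnegative, yields $|x| \leq \sqrt{2\rho}\, \OPT(\III)$.

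There is no serious obstacle here; the substantive work is already contained in the two preceding lemmas. The only points to check are the averaging step (the minimum over copies is at most the average) and that $x$ is a genuine solution of the stated size, which was established right after \cref{map:2dnr-backward}.
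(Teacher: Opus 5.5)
Your proof is correct and follows the paper's argument: you bound $|x|$ by the average $\frac{1}{\mu}|V_\mathcal{S} \cap \Int(x')|$ using the minimizing choice of $\ell$, then chain \cref{lem:2dnr-first}, the hypothesis $L(x') \leq \rho \OPT(\III')$, and \cref{lem:2dnr-second}. The only difference is cosmetic: you square the whole chain and take the root at the end, whereas the paper carries the square roots through each step.
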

\begin{proof}
    \begin{align*}
        |x|
        &= \min_{\ell=1}^\mu | V_\mathcal{S}^\ell \cap \Int(x')| && \text{(\cref{map:2dnr-backward})} \\
        &\leq \frac{1}{\mu} | V_\mathcal{S} \cap \Int(x')| \\
        &\leq \frac{1}{\mu} \sqrt{L(x')} && \text{(\cref{lem:2dnr-first})} \\
        &\leq \frac{1}{\mu} \sqrt{\rho \OPT(\III')} \\
        &\leq \frac{1}{\mu} \sqrt{2 \rho \mu^2 \OPT(\III)^2} && \text{(\cref{lem:2dnr-second})} \\
        &= \sqrt{2 \rho} \OPT(\III) && \qedhere
    \end{align*}
\end{proof}

Notably, the gap to the optimum tightens: if $x'$ deviates from the optimum by a factor of at most $\rho$, then $x$ deviates from the optimum by a factor of at most $\sqrt{2\rho}$.
We use this to prove the theorem.

\begin{proof}[Proof of \cref{thm:2dnr-hardness}]
    Assume for contradiction that a ($c \log^2 n$)-approximation algorithm for \scup{2-DNR} exists for every $c > 0$.
    Consider some $c > 0$.
    Given an instance $\III$ of \scup{Set Cover}, we do the following.
    First, we use the polynomial-time forward mapping to produce an instance $\III'$ of \scup{2-DNR}.
    Then, we use the ($c \log^2 n$)-approximation algorithm to compute a solution $x'$ to $\III'$ with $L(x') \leq (c \log^2 n) \cdot \OPT(\III')$ in polynomial time.
    Finally, we run the polynomial-time backward mapping to transform $x'$ into a solution $x$ to $\III$.
    Note that executing the entire sequence of steps only takes polynomial time.

    By \cref{lem:2dnr-final}, we have
    \begin{align*}
        |x|
        &\leq \sqrt{2c \log^2 n} \cdot \OPT(\III) \\
        &= \sqrt{2c} \cdot \log(n) \cdot \OPT(\III) \\
        &= \sqrt{2c} \cdot \log(\mu^2 + \nu \mu + 2) \cdot \OPT(\III) \\
        &< 3 \sqrt{2c} \cdot \log(\nu + \mu) \cdot \OPT(\III),
    \end{align*}
    where the last inequality uses $\nu \geq 1$ and $\mu \geq 1$.
    By appropriately choosing $c > 0$, the term $3 \sqrt{2c}$ can be made arbitrarily close to zero.
    This implies that \scup{Set Cover} can be approximated within a factor of $\tilde{c} \log(\nu + \mu)$ for any $\tilde{c} > 0$, a contradiction unless $\mathrm{P} = \mathrm{NP}$.
\end{proof}

\section{Outlook}

We extended the approximability study of \scup{DNR} in two directions.
For the single-source setting and uniform resistances, we improved the best known approximation bound from $\bigO(n)$ to $\bigO(\sqrt{n})$.
The large gap to the lower bound in this setting underlines the need for further research.
Moreover, we provided the first approximability for the multi-source settings.
Here, the approximation gap is closed, with the lower bound holding even for planar graphs.
Further analyzing the bounds of tractability of \scup{DNR} with respect to structural graph parameters such as the treewidth or feedback edge number
would nicely complement the theoretical contributions to \scup{DNR},
not least because the number of switchable edges beyond a spanning three is small in practice.
Further, to get closer to the real-world scenario, one could add operational constraints such as thermal line limits (edge capacities) or lower bounds on the voltage of the vertices (derivable from the flow).
Finally, one may ask not for a loss-minimal configuration but for a sequence of single switching actions transforming a given configuration into a target one, with every intermediate configuration feasible (e.g. line and voltage limits satisfied).

\section*{Acknowledgments}

We thank Aikaterini Niklanovits and Stefan Schmid for helpful discussions and comments.
This paper was written as part of the project ``Quantum-based Energy Grids (QuGrids)'', which receives funding from the program ``Profilbildung 2022'', an initiative of the Ministry of Culture and Science of the State of North Rhine-Westphalia.

\sloppy
\printbibliography

\end{document}